\newcommand{\lp}{\ensuremath{\Pi}}
\newcommand{\alphabet}{\ensuremath{\mathcal{P}}}
\newcommand{\dneg}{\ensuremath{not\ }}
\newcommand{\head}[1]{\ensuremath{head(#1)}}
\newcommand{\body}[1]{\ensuremath{body(#1)}}
\newcommand{\pbody}[1]{\ensuremath{body(#1)^{+}}}
\newcommand{\nbody}[1]{\ensuremath{body(#1)^{-}}}
\newcommand{\atoms}[1]{\ensuremath{atom(#1)}}
\newcommand{\bodies}[1]{\ensuremath{body(#1)}}
\newcommand{\ass}{\ensuremath{\mathbf{A}}}
\newcommand{\Tass}{\ensuremath{\ass^\mathbf{T}}}
\newcommand{\Fass}{\ensuremath{\ass^\mathbf{F}}}
\newcommand{\tass}[1]{\ensuremath{\mathbf{T}#1}}
\newcommand{\fass}[1]{\ensuremath{\mathbf{F}#1}}
\newcommand{\es}[2]{\ensuremath{{ES}_{#2}(#1)}}
\newcommand{\dg}[1]{\ensuremath{{DG}(#1)}}
\newcommand{\loops}[1]{\ensuremath{{loop}(#1)}}
\newcommand{\scc}[1]{\ensuremath{scc(#1)}}
\newcommand{\up}[2]{\ensuremath{{UP}(#1,#2)}}
\newcommand{\wfn}[3]{\ensuremath{{WFN}[#1](#2,#3)}}
\newcommand{\wfj}[3]{\ensuremath{{WFJ}[#1](#2,#3)}}
\newcommand{\wfd}[3]{\ensuremath{{WFD}[#1](#2,#3)}}
\newcommand{\front}[1]{\ensuremath{{front}(#1)}}
\newcommand{\back}[1]{\ensuremath{{back}(#1)}}
\newcommand{\flow}[2]{\ensuremath{{SFG}(#1,#2)}}
\newcommand{\cassignment}{\ensuremath{A}}
\newcommand{\reachable}[3]{\ensuremath{{reachable}(#1,#2,#3)}}
\newcommand{\reach}[2]{\ensuremath{\text{REACH}[#1,#2]}}
\newcommand{\reached}[1]{\ensuremath{reached(#1)}}
\newcommand{\start}[1]{\ensuremath{start(#1)}}
\newcommand{\edge}[1]{\ensuremath{edge(#1)}}
\begin{document}

\abovedisplayshortskip=2pt 
\belowdisplayshortskip=2pt
\abovedisplayskip=2pt
\belowdisplayskip=2pt

\title{Efficient Approximation of Well-Founded Justification \\ and Well-Founded Domination \\[2em] (Corrected and Extended Version)}

\author{Christian Drescher \and Toby Walsh}

\institute{NICTA and the University of New South Wales}

\maketitle

\begin{abstract}
Many native ASP solvers exploit unfounded sets to compute consequences of a logic program via some form of well-founded negation, but disregard its contrapositive, well-founded justification (WFJ), due to computational cost.
However, we demonstrate that this can hinder propagation of many relevant conditions such as reachability.
In order to perform WFJ with low computational cost, we devise a method that approximates its consequences by computing dominators in a flowgraph, a problem for which linear-time algorithms exist.
Furthermore, our method allows for additional unfounded set inference, called well-founded domination (WFD).
We show that the effect of WFJ and WFD can be simulated for a important classes of logic programs that include reachability. \\[1em]

This paper is a corrected version of~\cite{drwa13a}. It has been adapted to exclude Theorem~10 and its consequences, but provides all missing proofs.
\end{abstract}

\section{Introduction}
The task of ASP solving is naturally broken up into a combination of search and propagation. The latter can be viewed in terms of inference operations like unit propagation on the Clark's completion~\cite{clark78} (UP) and unfounded set~\cite{gerosc91} computation. Unfounded sets characterise atoms in a logic program that might circularly support themselves when they have no external support and are thus not included in any answer set.
While propagating consequences from the completion is well studied and implemented\cite{gekanesc07a,gilima06a,linzha02a}, the task of efficiently propagating all information provided by unfounded sets is not yet solved~\cite{gebsch06c}. Instead, native ASP solvers\cite{gekanesc07a,lepffaeigopesc06a,siniso02a} apply unfounded set propagation asymmetrically via some form of well-founded negation (WFN), e.g., forward loop (FL) inference\cite{gebsch06c}, to exclude atoms that have no external support.
However, without their contrapositives, well-founded justification (WFJ) or its restriction, backward loop (BL) inference, as we shall see, propagation of many important conditions may be hindered. An example is given through reachability, which is relevant to a range of real world applications, and for which very natural and efficient ASP encodings exist.

In this paper, we address this deficiency.
Our main contribution is a linear-time algorithm that approximates the consequences of WFJ. The approach is based on a novel graph-representation of logic programs, termed the support flowgraph. We show that the problem of finding all dominators in such graph, for which efficient algorithms exist, can be used to approximate WFJ and even simulate BL and WFJ for important classes of logic programs.
Our techniques give rise to new forms of ASP inference, well-founded domination (WFD) and loop domination (LD). WFD and LD are the atom counterparts of WFJ and BL, respectively, i.e., they include atoms into an answer set in order to guarantee external support to already included atoms.
Then, we analyse the ASP inference on reachability. Contrary to the intuition that ASP systems naturally and efficiently handle reachability, we demonstrate that restricting inference to the combination of UP and WFN can hinder its propagation. Additional information, however, can be drawn from unfounded sets. We show that WFD and LD can lead to additional pruning, and that applying UP, FL, BL, and LD on reachability prunes all possible values.
%

\section{Preliminaries}
In this paper, we consider normal logic programs.
Given a set of atomic propositions~$\alphabet$, a \emph{(normal) logic program}~$\lp$ is a finite set of \emph{rules}~$r$ of the form $p_0 \leftarrow p_1 , \dots , p_m, \dneg p_{m+1} , \dots , \dneg p_n$ where $p_i\!\in\!\alphabet$ are \emph{atoms} ($0 \leq i \leq n$) and $\dneg{} p_j$ is the \emph{default negation} of~$p_j$ ($m < j \leq n$).
The atom~$\head{r} = p_0$ is referred to as the \emph{head} of~$r$ and the set $\body{r} = \{p_1 , \dots , p_m, \dneg p_{m+1} , \dots , \dneg p_n\}$ as the \emph{body} of~$r$. Let $\pbody{r} = \{p_1 , \dots , p_m\}$ and $\nbody{r} = \{p_{m+1} , \dots , p_n\}$. We denote by $\atoms{\lp}$ the set of atoms occurring in~$\lp$, and by $\bodies{\lp}$ the set of bodies in~$\lp$.
To access bodies sharing the same head~$p$, define $\body{p} = \{ \body{r}\!\mid\!r\!\in\!\lp,\ \head{r} = p \}$.
A set $X\!\subseteq\!\alphabet$ is an \emph{answer set} of a logic program~$\lp$ if $X$~is the least model of the \emph{reduct}~$\{ \head{r} \leftarrow \pbody{r}\!\mid\!r\!\in\!\lp,\ \nbody{r} \cap X = \emptyset\}$.

Answer sets can be characterised as assignments that assign true to an atom if and only if it is included in the answer set.
Extending assignments to bodies in a logic program can greatly reduce proof complexity~\cite{gebsch06c}.
Hence, for a logic program~$\lp$, we define an \emph{assignment}~$\ass$ as a set of \emph{literals} of the form~$\tass{x}$ or $\fass{x}$ where $x\!\in\!\atoms{\lp} \cup \bodies{\lp}$. Intuitively, $\tass{x}$~expresses that $x$~is assigned \emph{true} and $\fass{x}$~that it is~\emph{false} in~$\ass$.
The complement of a literal~$\sigma$ is denoted $\overline{\sigma}$. Let $\Tass = \{ x\!\mid\!\tass{x}\!\in\!\ass \}$ and $\Fass = \{ x\!\mid\!\fass{x}\!\in\!\ass \}$. $\ass$~is \emph{conflict-free} if~$\Tass \cap \Fass = \emptyset$, and it is \emph{body-saturated} if~$\{ \beta\!\in\!\bodies{\lp}\!\mid\!(\beta^+ \cap \Fass) \cup (\beta^- \cap \Tass) \neq \emptyset  \}\!\subseteq\!\Fass$, i.e., all bodies containing an atom that is assigned false must be false.
Finally, $\ass$~is \emph{total} if $\atoms{\lp} \cup \bodies{\lp} = \Tass \cup \Fass$. 

Following~Lee~\cite{lee05a}, answer sets of a logic program are given through total, conflict-free assignments that do not violate the conditions induced by the programs \emph{completion}~\cite{clark78}, and contain no non-empty unfounded set.
We use the concept of nogoods for representing the conditions from a program's completion. Following~\cite{gekanesc07a}, a \emph{nogood} is a set of literals $\delta = \{ \sigma_1, \dots, \sigma_n \}$, and given a set of nogoods~$\Delta$, a total and conflict-free assignment~$\ass$ is a \emph{solution} if~$\delta \not\subseteq \ass$ for all~$\delta\!\in\!\Delta$.
In our setting, every nogood is equivalent to a clause in CNF-SAT, e.g., the nogood $\delta = \{ \sigma_1, \dots, \sigma_n \}$ represents the clause $\overline{\sigma_1} \lor \dots \lor \overline{\sigma_n}$, and vice versa, and a set of nogoods is equivalent to a CNF-SAT formula.
To reflect the conditions from a program's completion, for $\beta = \{ a_1 , \dots , a_m, \dneg a_{m+1} , \dots , \dneg a_n \}\!\in\!\bodies{\lp}$, define
$$
\Delta_\beta = \left\{ \begin{array}{l}
\{\tass{a_1}, \dots, \tass{a_m}, \fass{a_{m+1}}, \dots \fass{a_n}, \fass{\beta} \}, \\
\{\fass{a_1}, \tass{\beta}\}, \dots, \{\fass{a_m}, \tass{\beta}\}, \{\tass{a_{m+1}}, \tass{\beta}\}, \dots, \{\tass{a_n}, \tass{\beta}\}
\end{array} \right\}
$$
and for an atom $p\!\in\!\atoms{\lp}$ with $\bodies{p} = \{\beta_1, \dots, \beta_k\}$ define
$$
\Delta_p = \left\{ \begin{array}{l} \{\tass{\beta_1}, \fass{p}\}, \dots, \{\tass{\beta_k}, \fass{p}\},  \{\fass{\beta_1}, \dots, \fass{\beta_k}, \tass{p} \} \end{array} \right\}\text{.}
$$
Intuitively, the nogoods in~$\Delta_\beta$ enforce the truth of body~$\beta$ if and only if all its members are satisfied, and the nogoods in~$\Delta_p$ enforce the truth of an atom~$p$ if and only if all at least one of its bodies is satisfied.
Let~$\Delta_{\lp} = \bigcup_{\beta \in \bodies{\lp}} \Delta_\beta \cup \bigcup_{p \in \atoms{\lp}} \Delta_p$. The solutions for~$\Delta_{\lp}$ correspond to the models of the completion of~$\lp$~\cite{gekanesc07a}. 

We now turn to unfounded sets.
For a logic program~$\lp$ and a set $U\!\subseteq\!\atoms{\lp}$, the \emph{external support} of~$U$ is defined as $\es{U}{\lp} = \{ \body{r}\!\mid\!r\!\in\!\lp,\ \head{r}\!\in\!U,\ \pbody{r} \cap U = \emptyset \}$.
Given an assignment~$\ass$, $U$~is an \emph{unfounded set}~\cite{gilima06a} of~$\lp$ w.r.t.~$\ass$ if $\es{U}{\lp}\!\subseteq\!\Fass$.
We define~$\ass$ as \emph{unfounded-free} if $\{ p\!\mid\!U\!\subseteq\!\atoms{\lp},\ p\!\in\!U,\ \es{U}{\lp}\!\setminus\!\Fass = \emptyset\}\!\subseteq\!\Fass$, i.e., all atoms from unfounded sets are false.
Attention is often restricted to unfounded sets that are subsets of strongly connected components (i.e., loops) in the \emph{dependency graph} of~$\lp$ given through~$\dg{\lp} = (\atoms{\lp} \cup \bodies{\lp}, \{ (\body{r}, \head{r})\!\mid\!r\!\in\!\lp\} \cup \{ (p, \body{r})\!\mid\!r\!\in\!\lp,\ p\!\in\!\pbody{r} \})$.
A non-empty set of atoms~$U \subseteq \atoms{\lp}$ is a \emph{loop} of~$\lp$ if for any $p,q\!\in\!U$ there is a path from~$p$ to~$q$ in~$\dg{\lp}$ such that all atoms in the path belong to~$U$~\cite{lee05a}. We denote by~$\loops{\lp}$ the set of all loops in~$\lp$, and define for~$\beta\!\in\!\bodies{\lp}$ the set~$\scc{\beta}$ as being composed of all atoms that belong to the same strongly connected component as~$\beta$.

Next, we introduce to propagation in ASP, starting with unit propagation~(UP).
Given an assignment~$\ass$, for a nogood~$\delta$ and~$\sigma\!\in\!\delta$, if~$\delta \!\setminus\! \{\sigma\}\!\subseteq\!\ass$ and $\overline{\sigma} \not\in \ass$ then~$\delta$ is \emph{unit} w.r.t.~$\ass$ and~$\overline{\sigma}$ is \emph{unit-resulting}, i.e., only unit-resulting literals can avert~$\delta\!\subseteq\!\ass$.
UP is the process of extending an assignment with unit-resulting literals. Formally, we define
\[
\up{\lp}{\ass} = \begin{cases}
\ass \cup \{\sigma\} & \text{if } \sigma \text{ is unit-resulting w.r.t. } \ass \text{ for some } \delta\!\in\!\Delta_{\lp}\text{,} \\
\ass & \text{otherwise.}
\end{cases}
\]
There might be several choices for~$\sigma$ in general. Therefore, we often consider fixpoints.
The effects of UP are determined by the nogoods in~$\Delta_{\lp}$, whose intuition was given earlier in this section.
In particular, fixpoint operation of UP achieves a body-saturated assignment.
Note that the notion of unit nogoods is the nogood-equivalent of unit clauses in CNF-SAT~\cite{mitchell05a}.
Hence, application of the unit clause rule (equivalently termed unit propagation) on the CNF-SAT representation of~$\Delta_{\lp}$ simulates UP on~$\Delta_{\lp}$.

An inference operation that aims at unfounded sets is well-founded negation~(WFN). WFN is the process of extending an assignment by assigning false to all atoms that are included in an unfounded set. Formally, for sets of atoms~$\Omega\!\subseteq\!2^{\atoms{\lp}}$ we define
\[
\wfn{\Omega}{\lp}{\ass} = \begin{cases}
\ass \cup \{\fass{p}\} & \text{if } U\!\in\!\Omega,\ p\!\in\!U,\ \es{U}{\lp}\!\setminus\!\Fass = \emptyset\text{,} \\
\ass & \text{otherwise.}
\end{cases}
\]
By construction, if $\Omega  = 2^{\atoms{\lp}}$ then fixpoint operation of $\wfn{\Omega}{\lp}{\ass}$ achieves an unfounded-free assignment.
In practice, it is enough to consider only unfounded sets that are loops, i.e., $\Omega = \loops{\lp}$, resulting in a restricted from of WFN referred to as forward loop (FL). Fixpoint operation of FL and UP, however, simulates the effect of WFN and UP~\cite{gebsch06c}.
FL can be implemented such that it takes $\mathcal{O}(|\lp|)$ time~(cf. \cite{angesc06a}).
The contrapositive of WFN is well-founded justification~(WFJ). It assigns true to the only external support of a set of atoms that contains a true atom.
\[
\wfj{\Omega}{\lp}{\ass} = \begin{cases}
\ass \cup \{\tass{\beta}\} & \text{if } U\!\in\!\Omega,\ p\!\in\!U \cap \Tass,\ \es{U}{\lp}\!\setminus\!\Fass = \{\beta\}\text{,} \\
\ass & \text{otherwise.}
\end{cases}
\]
Again, we consider the alternatives~$\Omega = 2^{\atoms{\lp}}$ (WFJ) and~$\Omega = \loops{\lp}$ (backward loop; BL).
In general, WFJ propagates more consequences than BL~\cite{gebsch06c}.
The time complexity of WFJ is bounded by~$\mathcal{O}(|\lp|^2)$, relatively high computational cost, as it amounts to failed-literal-detection and WFN.
Example~\ref{ex:inference} demonstrates the effect of WFJ.
\begin{example} \label{ex:inference}
Consider the logic program~$\lp$ given through the following set of rules.
\begin{align*}
a &\leftarrow \dneg{}b & c &\leftarrow d & e &\leftarrow f & c &\leftarrow a         \\
b &\leftarrow \dneg{}a & d &\leftarrow c & f &\leftarrow e & e &\leftarrow \dneg{} a
\end{align*}
Given the assignment~$\ass = \{\tass{c}\}$, applying UP to a fixpoint results in the (extended) assignment~$\{\tass{c}, \tass{\{c\}}, \tass{d}, \tass{\{d\}}\}$. WFN cannot propagate any additional information. In particular, neither UP nor WFN infer $\tass{a}$ (which is in the only total assignment that contains $\ass$ and corresponds to an answer set of~$\lp$). However, WFJ yields $\tass{\{a\}}$ since $\es{\{c,d\}}{\lp}\setminus\Fass = \{\{a\}\}$. In turn, repeated application of UP adds $\tass{a}$, $\fass{\{\dneg{}a\}}$ and $\fass{b}$ to the assignment, and WFN yields~$\fass{e}$ and $\fass{f}$ since $\es{\{e,f\}}{\lp}\setminus\{\{\dneg{} a\}\}= \emptyset$.
\end{example}
Now that we have established relevance of WFJ, we turn our attention to propagating WFJ. Recall that propagating WFJ can have quadratic costs. In the next section, we will introduce a method that approximates WFJ with only linear costs.

\section{Dominators in the Support Flowgraph} \label{sec:wfj}
We take a look at how support \emph{flows} through a logic program, represented in a flowgraph. In our context, a \emph{flowgraph} is a directed graph with a specially designated \emph{source} node.
\begin{definition}
Given a logic program~$\lp$ and an assignment~$\ass$. The \emph{support flowgraph} of~$\lp$ w.r.t.~$\ass$, denoted~$\flow{\lp}{\ass}$, is a directed graph defined as follows:
\begin{compactenum}
\item Create a node for each atom in $\atoms{\lp}$ and for each body in $\bodies{\lp}$, labelled with that atom or body, respectively.
\item The predecessors of an atom node~$p$ are all bodies in~$\bodies{p}\!\setminus\!\Fass$. The predecessors of a body node~$\beta$ are the set of atoms~$\phi(\beta)\!\setminus\!\Fass$ where
\[
\phi(\beta) = \begin{cases}
\beta^+ & \text{if } \scc{\beta} \cap \beta^+ = \emptyset \\
\scc{\beta} \cap \beta^+ & \text{otherwise.}
\end{cases}
\]
Observe that~$\phi(\beta)\!\subseteq\!\beta^+$.
\item Add a special node~$\top$ as the predecessor for all body nodes that do not have a predecessor, i.e., bodies from rules~$r\in\lp$ such that $\phi(\body{r}) = \emptyset$.
\end{compactenum}
\end{definition}
Nodes corresponding to atoms are referred to as \emph{atom nodes}, and nodes corresponding to bodies are referred to as \emph{body nodes}. We will also identify nodes with the atoms and bodies labelling them.
By construction, any predecessor of an atom is always a body, and for every body either all predecessors are atoms it positively depends on or the special node~$\top$ is the only predecessor. 
Note that~$\flow{\lp}{\ass}$ is a flowgraph with source node~$\top$. Its size is linear in the size of~$\lp$, and its construction can be made incremental w.r.t. the assignment, i.e., edges are removed down any branch of the search tree and re-inserted upon backtracking.

The intuition behind~$\flow{\lp}{\ass}$ is that (1)~the node~$\top$, representing syntactic truth, provides support to every non-false body that has no positive dependency, (2)~every body $\beta$~potentially provides external support to all atoms that appear in the head of a rule with body~$\beta$, and in turn, (3)~every non-false atom~$p$ can provide support to the bodies that are positively dependent on~$p$. The latter is determined by~$\phi$, according which, bodies in a non-trivial strongly connected component can only receive support from atoms that are in the same component. This design choice is motivated by the desire to restrict the intake of support to atoms in strongly connected components of the logic program.

It is easy to verify that if~$\ass$ is body-saturated then every body in~$\bodies{\lp}\!\setminus\!\Fass$ has a predecessor, and that by design, if~$\ass$ is unfounded-free then for every atom $p\!\in\!\atoms{\lp}\!\setminus\!\Fass$ there is a path from~$\top$ to~$p$.

We use cuts of the support flowgraph to analyse the flow of support.
For a directed graph~$(V,E)$ a \emph{cut}~$c=(S,W)$ is a partition of~$V$ into two disjoint subsets~$S$ and~$W$. 
For accessing the nodes in~$S$ that have an edge into~$W$, define $\front{c} = \{ u\!\in\!S\!\mid\!(u,v)\!\in\!E,\ v\!\in\!W \}$.
Note that, in principle, edges from~$W$ to~$S$ are allowed. For nodes in~$W$ that have an edge into~$S$, define $\back{c} = \{ u\!\in\!W\!\mid\!(u,v)\!\in\!E,\ v\!\in\!S \}$.
\begin{definition}
Given a logic program~$\lp$ and an assignment~$\ass$. A cut~$c = (S,W)$ of~$\flow{\lp}{\ass}$ is a \emph{support cut} if~$\top\!\in\!S$, $\front{c}\!\subseteq\!\bodies{\lp}$, and $\back{c}\!\subseteq\!\bodies{\lp}$.
\end{definition}
In words, for any support cut~$c = (S,W)$, the condition~$\front{c}\!\subseteq\!\bodies{\lp}$ ensures that whenever a body is in~$W$ then all its predecessors are in~$W$, and~$\back{c}\!\subseteq\!\bodies{\lp}$ ensures that whenever an atom is in~$W$ then all its successors are in~$W$.
\begin{example} \label{example:wfj} Consider the logic program~$\lp$ given through the following set of rules:
\noindent\parbox[][][t]{.45\linewidth}{
\begin{tikzpicture}
	[inner sep=1pt, >=stealth]
	\node (t) at (0.0,.5) {$\top$};
	\node (1) at (1.5,1) {$\{\dneg{} c\}$};
	\node (2) at (1.5,0) {$\{\dneg{} b\}$};
	\node (b) at (3.0,1) {$b$};
	\node (c) at (3.0,0) {$c$};
	\node (3) at (4.0,2.0) {$\{b,c\}$};
	\node (4) at (4.0,0.0) {$\{a\}$};
	\node (a) at (5.0,1) {$a$};
	\draw [->] (t) -- (1);
	\draw [->] (t) -- (2);
	\draw [->] (1) -- (b);
	\draw [->] (2) -- (c);
	\draw [->] (b) -- (3);
	\draw [->] (3) -- (a);
	\draw [->] (a) -- (4);
	\draw [->] (4) -- (b);
	\draw [dashed] plot [smooth] coordinates { (4,-0.5) (2.5,1) (3.5,2.5) };
	\draw (2.75,2.35) node {cut $c$};
	\draw [dashed] plot [smooth] coordinates { (5.5,1.5) (4.5,1.25) (3.,.5) (2.5,0) (2.5,-0.5) };
	\draw (5.2,1.75) node {cut $c^\prime$};
\end{tikzpicture}
}
\hfill
\noindent\parbox[][][t]{.5\linewidth}{
\begin{align*}
a &\leftarrow b,\ c & b &\leftarrow a         & b &\leftarrow \dneg{} c & c &\leftarrow \dneg{} b 
\end{align*}
Note that~$c\!=\!(\{\top, c, \{\dneg{} b\},\{\dneg{} c\}\},$ $\{a,b,\{a\},\{b,c\}\})$ and $c^\prime\!=\!(\{\top, b, c, \{b, c\}$, $\{\dneg{} b\}, \{\dneg{} c\}\}, \{a,c,\{a\}\})$ both are support cuts of $\flow{\lp}{\emptyset}$. Verify that~$\es{\{a,b\}}{\lp}\!=\!\{\{\dneg{} c\}\}\!=\!\front{c}$ and that $\es{\{a,c\}}{\lp}\!=\!\{\{\dneg{} b\}\} \subseteq \front{c^\prime} = \{\{\dneg{} b\}, \{b,c\}\}$.
}
Observe that $\front{c}$ represents external support of~$\{a,b\}$, while $\front{c^\prime}$ approximates (i.e., provides an upper bound of) the external support of~$\{a,c\}$.
\end{example}
Observe that~$\front{c} \cap \Fass = \emptyset$, by definition of a flowgraph.
The following lemma guarantees that every support cut in~$\flow{\lp}{\ass}$ separates a set of atoms from its external support.
\begin{lemma} \label{lemma:correctness}
Given a logic program~$\lp$ and a body-saturated assignment~$\ass$.
If~$c=(S,W)$ is a support cut of~$\flow{\lp}{\ass}$ then $\es{W \cap \atoms{\lp}}{\lp}\!\setminus\!\Fass\!\subseteq\!\front{c}$.
\end{lemma}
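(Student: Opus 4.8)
The plan is to take an arbitrary body~$\beta \in \es{W \cap \atoms{\lp}}{\lp} \setminus \Fass$ and show that $\beta \in \front{c}$, i.e., that $\beta \in S$ and $\beta$ has a successor in~$W$. By definition of external support, $\beta = \body{r}$ for some rule~$r$ with $\head{r} \in W \cap \atoms{\lp}$ and $\pbody{r} \cap (W \cap \atoms{\lp}) = \emptyset$. Since $\head{r} \in W$ is an atom that is a successor of the body node~$\beta$ in the flowgraph (because $\beta \in \bodies{\head{r}} \setminus \Fass$, using $\beta \notin \Fass$), it suffices to establish that $\beta \in S$; the "successor in~$W$" requirement is then witnessed by~$\head{r}$.

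To show $\beta \in S$, I would argue by contradiction: suppose $\beta \in W$. Because $c$ is a support cut, $\front{c} \subseteq \bodies{\lp}$, which (as the excerpt notes in words) forces every predecessor of a body node in~$W$ to again lie in~$W$. The predecessors of the body node~$\beta$ are $\phi(\beta) \setminus \Fass$, and since $\ass$ is body-saturated and $\beta \notin \Fass$, no member of~$\beta^+$ is in~$\Fass$; hence the predecessor set is exactly $\phi(\beta)$ (which is~$\beta^+$ if $\scc{\beta} \cap \beta^+ = \emptyset$, and $\scc{\beta} \cap \beta^+$ otherwise, but in either case a subset of~$\beta^+$). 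So $\phi(\beta) \subseteq W$. Now I use the other half of the support-cut condition, $\back{c} \subseteq \bodies{\lp}$: whenever an atom is in~$W$, all of its successors are in~$W$. Each atom $q \in \phi(\beta)$ has $\beta$ among its successors trivially, but more usefully: I want to propagate membership in~$W$ from these atoms back toward the head. The key observation is that $\pbody{r} \cap W \cap \atoms{\lp} = \emptyset$, yet $\phi(\beta) \subseteq \pbody{r} = \beta^+$, so $\phi(\beta) \cap W = \emptyset$. Combined with $\phi(\beta) \subseteq W$ (from the previous step), this forces $\phi(\beta) = \emptyset$. But then by construction~$\beta$'s unique predecessor is the source node~$\top$, and $\top \in S$ since $c$ is a support cut — contradicting $\front{c} \subseteq \bodies{\lp}$ (the edge $(\top,\beta)$ would be a front edge with an atom... no, $\top$ is not an atom, so this particular contradiction must be sharpened).

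The main obstacle, and the place where I expect the argument to need the most care, is precisely this last point: the case $\phi(\beta) = \emptyset$, where $\beta$'s only predecessor is~$\top$. Here I cannot derive a contradiction from $\front{c} \subseteq \bodies{\lp}$ directly, since $\top \notin \atoms{\lp} \cup \bodies{\lp}$. Instead I should revisit the support-cut definition: it explicitly requires $\top \in S$. If $\beta \in W$ and its only predecessor is~$\top \in S$, then the edge $(\top, \beta)$ is an edge from~$S$ to~$W$, so $\top \in \front{c}$; but $\top \notin \bodies{\lp}$, contradicting $\front{c} \subseteq \bodies{\lp}$. So the contradiction does go through — the subtlety is just recognizing that $\top$ being forced into $\front{c}$ is itself the violation. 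Thus in all cases $\beta \in W$ is impossible, so $\beta \in S$, and together with the successor~$\head{r} \in W$ we conclude $\beta \in \front{c}$, as required.
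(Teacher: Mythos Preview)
Your argument is correct and follows essentially the same route as the paper's proof: take $\beta=\body{r}$ in $\es{W\cap\atoms{\lp}}{\lp}\setminus\Fass$, use the edge $\beta\to\head{r}$ to reduce to showing $\beta\in S$, and rule out $\beta\in W$ by observing that its predecessors $\phi(\beta)$ would then lie in $W\cap\pbody{r}$, contradicting $\pbody{r}\cap W\cap\atoms{\lp}=\emptyset$. Two minor remarks: your invocation of the $\back{c}\subseteq\bodies{\lp}$ condition is unnecessary here (neither you nor the paper actually use it for this lemma), and your explicit treatment of the $\phi(\beta)=\emptyset$ case via $\top\in\front{c}$ is a detail the paper leaves implicit but which you handle correctly.
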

\begin{proof}
Let~$c=(S,W)$ be a support cut of~$\flow{\lp}{\ass}$. Then, $\front{c}\subseteq\bodies{\lp}\!\setminus\!\Fass$.
By definition of a support cut, for all~$r\in\lp$ such that $\body{r}\not\in\Fass$, if $\head{r}\!\in\!W\!\setminus\!\Fass$ then either $\body{r}\!\in\!\front{c}$ or $\body{r}\!\in\!W$. Since~$\ass$ is body-saturated, if $\body{r}\!\in\!W$ then $\phi(\body{r})\!\subseteq\!W \cap \atoms{\lp}$, and by definition of~$\phi$, $\pbody{r} \cap W \cap \atoms{\lp} \neq \emptyset$. In conclusion, we get $\es{W \cap \atoms{\lp}}{\lp}\!\setminus\!\Fass\!\subseteq\!\front{c}$.
\end{proof}
Hence, the set of bodies in~$\front{c}$ provide an upper bound on the external support of the atoms in~$W$.
However, we are more interested in finding support cuts that separate a set of atoms from a single external support, i.e., $\es{W \cap \atoms{\lp}}{\lp}\!\setminus\!\Fass = \{\beta\}$ for some~$\beta\!\in\!\bodies{\lp}$. Following from the previous lemma, this single external support is in a domination relationship with the set of atoms it supports.
Formally, in a flowgraph~$(V,E)$, a node~$u\!\in\!V$ \emph{dominates}~$v$ if every path from the source node to~$v$ passes through~$u$. It is easy to verify that a node~$v\!\in\!S$ dominates all nodes in~$W$ if and only if there is a cut~$c=(S,W)$ such that~$s\!\in\!S$ and~$\front{c}\!\subseteq\!\{v\}$.
\begin{theorem} \label{thm:correctness}
Given a logic program~$\lp$ and a body-saturated, unfounded-free assignment~$\ass$. Let~$U\!\subseteq\!\atoms{\lp}$ such that $U \cap \Tass \neq \emptyset$, and~$\beta\!\in\!\bodies{\lp}$.
If $\beta$~dominates all atoms in~$U$ in~$\flow{\lp}{\ass}$ then $\es{U}{\lp}\!\setminus\!\Fass = \{\beta\}$.
\end{theorem}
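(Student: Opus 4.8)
The plan is to reduce Theorem~\ref{thm:correctness} to Lemma~\ref{lemma:correctness} by exhibiting, from the domination hypothesis, a suitable support cut. Recall the characterisation noted just before the statement: $\beta$ dominates every atom in $U$ iff there is a cut $c=(S,W)$ with $\top\!\in\!S$ and $\front{c}\!\subseteq\!\{\beta\}$, where we may take $W$ to be exactly the set of nodes whose every path from $\top$ passes through $\beta$, together with $\beta$ removed to the $S$ side (i.e.\ $W$ is the set of nodes strictly dominated by $\beta$, and $S$ is the rest). First I would make this cut explicit: let $W = \{ v \mid \text{every path from } \top \text{ to } v \text{ uses } \beta \text{ and } v\neq\beta\}$ and $S = V\setminus W$. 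Then $\top\!\in\!S$, $\beta\!\in\!S$, $U\subseteq W$ (every atom of $U$ is strictly dominated by $\beta$ since $\beta$ is a body, hence $\beta\notin U$), and every edge crossing from $S$ into $W$ must originate at $\beta$ — otherwise a node in $W$ would be reachable from $\top$ avoiding $\beta$. Hence $\front{c}\subseteq\{\beta\}$.

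Next I must check that $c$ is a \emph{support cut}, i.e.\ $\front{c}\subseteq\bodies{\lp}$ and $\back{c}\subseteq\bodies{\lp}$. The first holds because $\front{c}\subseteq\{\beta\}$ and $\beta$ is a body. For $\back{c}$: suppose an atom node $p\!\in\!W$ has an edge into some $v\!\in\!S$. In $\flow{\lp}{\ass}$ the successors of an atom $p$ are body nodes $\beta'$ with $p\in\phi(\beta')$, so $v$ is a body; but I need $v\!\in\!W$, i.e.\ I need to argue atoms in $W$ cannot point to $S$ at all — actually $\back{c}\subseteq\bodies{\lp}$ only forbids \emph{atoms} in $\back{c}$, and the successor of an atom is always a body, so automatically $\back{c}\subseteq\bodies{\lp}$ is vacuously fine once we note $\back{c}$ consists of nodes in $W$ with an edge into $S$ — wait, $\back{c}$ could contain a body node of $W$ pointing to an atom in $S$. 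So I should instead argue directly that $\back{c}\subseteq\bodies{\lp}$: a node $u\!\in\!W$ with an edge to $S$; if $u$ were an atom then its successor is a body, fine, but that still puts $u$ (an atom) in $\back{c}$, violating the requirement. I would rule this out: if atom $u\!\in\!W$ has an edge to body $\beta'\!\in\!S$, then $\beta'$ is reachable from $\top$ avoiding $\beta$ (since $\beta'\notin W$), and then so is $u$ via... no, that's backwards. The clean fix is to choose $W$ as the set of nodes $v$ such that every path from $\top$ to $v$ uses $\beta$, \emph{and also} close $W$ downward under successors; equivalently note that if $u\!\in\!W$ then every successor $v$ of $u$ also has all its $\top$-paths through $\beta$ (a $\top$-path to $v$ avoiding $\beta$ would, unless it ends with the edge $u\to v$, have been a path avoiding $\beta$; and if it ends $u\to v$ then its prefix is a $\top$-path to $u$ avoiding $\beta$, contradiction) — hence $v\!\in\!W\cup\{\beta\}$, and $v\neq\beta$ because $\beta$ has no predecessor in $W$ other than... this needs $\beta\notin\mathrm{succ}(W)$, which may fail. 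I will sidestep: take $W$ to be the set strictly dominated by $\beta$; the above shows successors of $W$-nodes lie in $W\cup\{\beta\}\subseteq W\cup S$, so $\back{c}$ contains only nodes whose successors that fall in $S$ equal $\beta$; if such a node $u$ is an atom, its successor $\beta$ is a body — consistent — but the \emph{definition} demands $u$ itself be a body. So I genuinely need: no atom in $W$ has $\beta$ as a successor. Indeed if atom $p\!\in\!W$ and $p\in\phi(\beta)$, then $\beta$'s predecessor $p$ lies in $W$; but then every path from $\top$ to $p$ uses $\beta$, so to reach $p$ we first reach $\beta$, whose predecessor $p$... we get a cycle $p\to\beta\to\dots\to p$ all of whose $\top$-access goes through $\beta$, which is impossible since $\top$ must reach $\beta$ first and $\beta$'s only in-edges would have to come from outside this cycle — formally, $p$ dominated by $\beta$ and $p$ a predecessor of $\beta$ forces $p$ dominated by $p$'s own successor, contradicting acyclicity of the dominator tree. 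I will phrase this as: a predecessor of $\beta$ that is strictly dominated by $\beta$ cannot exist, since it would make $\beta$ not reachable from $\top$. That resolves $\back{c}\subseteq\bodies{\lp}$.

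With $c$ shown to be a support cut, Lemma~\ref{lemma:correctness} gives $\es{W\cap\atoms{\lp}}{\lp}\setminus\Fass\subseteq\front{c}\subseteq\{\beta\}$. Since $U\subseteq W\cap\atoms{\lp}$, monotonicity of external support in the set argument... actually $\es{\cdot}{\lp}$ is not monotone, so instead I note $\es{U}{\lp}\setminus\Fass \subseteq \es{W\cap\atoms{\lp}}{\lp}\setminus\Fass$ need not hold directly; rather I argue: any external support body $\gamma\!\in\!\es{U}{\lp}\setminus\Fass$ is the body of a rule with head in $U\subseteq W$ and $\pbody{\gamma}\cap U=\emptyset$; in $\flow{\lp}{\ass}$ the node $\gamma$ is a non-false predecessor of that head, hence an edge into $W$, so $\gamma\in\front{c}\cup W$, and if $\gamma\in W$ then by Lemma~\ref{lemma:correctness}'s argument $\pbody{\gamma}$ meets $W\cap\atoms{\lp}$ — but that does not immediately contradict $\pbody{\gamma}\cap U=\emptyset$. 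So the cleanest route is the one the authors likely intend: apply the lemma with $U' = W\cap\atoms{\lp}$, get $\es{U'}{\lp}\setminus\Fass\subseteq\{\beta\}$, then observe that $\es{U'}{\lp}\setminus\Fass$ is \emph{non-empty}: because $\ass$ is unfounded-free and $U'\supseteq U$ contains a true atom (so $U'\not\subseteq\Fass$), $U'$ cannot be unfounded, whence $\es{U'}{\lp}\setminus\Fass\neq\emptyset$; therefore $\es{U'}{\lp}\setminus\Fass=\{\beta\}$ exactly. Finally, to pass from $U'$ back to $U$: every atom in $U$ is dominated by $\beta$, so $\beta$ is the unique non-false external support of $U'$; one checks $\beta\in\es{U}{\lp}$ because the head it supports lies in $U$ and $\pbody{\beta}\cap U=\emptyset$ (as $\pbody{\beta}\subseteq S\cup\{$false atoms$\}$, since $\beta$'s predecessors are in $S$), and any other member of $\es{U}{\lp}\setminus\Fass$ would also be a non-false predecessor in the graph of an atom of $U\subseteq W$, hence in $\front{c}=\{\beta\}$. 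So $\es{U}{\lp}\setminus\Fass=\{\beta\}$, as required.

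I expect the main obstacle to be the bookkeeping around the cut's boundary conditions — specifically proving $\back{c}\subseteq\bodies{\lp}$, which forces me to rule out an atom strictly dominated by $\beta$ being a positive-body predecessor of $\beta$ itself; this is exactly where the acyclicity of the dominator relation and the special role of $\top$ as the unique source must be invoked. The transfer from $U'=W\cap\atoms{\lp}$ down to the given $U$ using unfounded-freeness to guarantee non-emptiness of the external support is the other point requiring care, but it is a short argument once the cut is in hand.
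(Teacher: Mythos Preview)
Your overall strategy---extract a cut from the domination hypothesis, invoke Lemma~\ref{lemma:correctness}, and use unfounded-freeness for non-emptiness---is exactly the paper's. The paper's proof is three lines: it simply \emph{asserts} a support cut $c=(S,W)$ with $\front{c}\subseteq\{\beta\}$ \emph{and} $U=W\cap\atoms{\lp}$, applies Lemma~\ref{lemma:correctness} directly to $U$, and concludes by unfounded-freeness. You rightly notice that the natural cut ($W=$ all nodes strictly dominated by $\beta$) only gives $U\subseteq W\cap\atoms{\lp}$, and that external support is not monotone, so a transfer from $U'=W\cap\atoms{\lp}$ back to $U$ is required.

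That transfer is where your argument breaks. You claim $\beta\in\es{U}{\lp}$ ``because the head it supports lies in $U$'', but the rule witnessing $\beta\in\es{U'}{\lp}$ has head in $U'$, not necessarily in $U$. You then claim any other $\gamma\in\es{U}{\lp}\setminus\Fass$ lies in $\front{c}$; but such a $\gamma$ is a predecessor of an atom in $U\subseteq W$, hence lies in $\front{c}\cup W$---precisely the case you flagged earlier and then dropped. These gaps are not repairable as stated: take $\lp=\{a\leftarrow;\ b\leftarrow a;\ c\leftarrow a,b\}$ with $\ass=\{\tass{c}\}$ (body-saturated, unfounded-free); the body $\emptyset$ dominates $c$, yet $\es{\{c\}}{\lp}\setminus\Fass=\{\{a,b\}\}\neq\{\emptyset\}$. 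So the paper's unjustified assertion that one may take $W\cap\atoms{\lp}=U$ exactly is doing real work, and your instinct that it needs proof is correct---your chosen $W$ does not deliver it. (Your side argument for $\back{c}\subseteq\bodies{\lp}$ is also flawed: successors of $W$-nodes need not lie in $W\cup\{\beta\}$ when they have other predecessors, and a predecessor of $\beta$ \emph{can} be strictly dominated by $\beta$ if $\beta$ has another predecessor reachable from $\top$ without $\beta$. This is harmless, though, since the proof of Lemma~\ref{lemma:correctness} never actually uses the $\back{c}$ condition.)
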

\begin{proof}
Let $\beta$~dominate all atoms in~$U$ in~$\flow{\lp}{\ass}$. Then, there is a support cut~$c=(S,W)$ such that~$\front{c}\!\subseteq\!\{\beta\}$ and~$U = W \cap \atoms{\lp}$. By Lemma~\ref{lemma:correctness}, $\es{W \cap \atoms{\lp}}{\lp}\!\setminus\!\Fass\!\subseteq\!\{\beta\}$ and therefore, $\es{U}{\lp}\!\setminus\!\Fass\!\subseteq\!\{\beta\}$. Since~$\ass$ is unfounded-free, it holds that~$\es{U}{\lp}\!\setminus\!\Fass \neq \emptyset$. We conclude~$\es{U}{\lp}\!\setminus\!\Fass = \{\beta\}$.
\end{proof}
The previous theorem grants the use of the domination relationship between bodies and atoms to compute consequences from WFJ.
\begin{example} Reconsider the logic program from Example~\ref{example:wfj}. The body~$\{\dneg{} c\}$ dominates the atom~$a$. Hence, if $a$ is assigned true then WFJ will set $\{\dneg{} c\}$ to true.\end{example}
A linear-time algorithm for finding all dominators in a flowgraph is provided in~\cite{geta04a}. It can be made incremental, i.e., few dominators might be recomputed at any stage during search, subject to removal and re-insertion of edges~\cite{srgale97a}. This puts our method to approximate WFJ on the same level of computational cost as WFN, resulting in a combined runtime complexity for unfounded set inference of~$\mathcal{O}(|\lp|)$.

The converse of Theorem~\ref{thm:correctness} does not hold in general, but we can provide conditions on logic programs for which our method is guaranteed to compute all consequences from WFJ and BL, respectively.
\begin{definition}
A \emph{unary logic program} is a logic program~$\lp$ such that for every rule~$r\!\in\!\lp$ it holds that $|\pbody{r}| \leq 1$. A \emph{component-unary logic program} is a logic program~$\lp$ such that for every rule~$r~\in~\lp$ it holds that $|\pbody{r} \cap \scc{\body{r}}| \leq 1$.
\end{definition}
Observe that every unary logic program is a component-unary logic program, but that component-unary logic programs are much more general.
A relevant example from the class of component-unary logic program is discussed in Section~\ref{sec:reach}.
In general, any logic program can become (component-) unary as truth values are assigned during search.
It is also important to note that for logic programs that are not (component-) unary, our method still simulates WFJ (BL) on the maximal (component-) unary sub-program.

For component-unary logic programs, the domination relationship between body- and atom nodes in the support flowgraph captures BL.
\begin{theorem} \label{thm:completion}
Given a component-unary logic program~$\lp$, and a body-saturated and unfounded-free assignment~$\ass$. Let~$L\!\in\!\loops{\lp}$ such that $L \cap \Tass \neq \emptyset$, and~$\beta\!\in\!\bodies{\lp}$.
The body~$\beta$ dominates all atoms in~$L$ in~$\flow{\lp}{\ass}$ if and only if~$\es{L}{\lp}\!\setminus\!\Fass = \{\beta\}$.
\end{theorem}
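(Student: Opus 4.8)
The plan is to prove the two directions separately. The ``only if'' direction is immediate and does not even use component-unarity: a loop $L$ is a non-empty set of atoms with $L \cap \Tass \neq \emptyset$, so if $\beta$ dominates every atom of $L$ then Theorem~\ref{thm:correctness}, applied with $U = L$ (and using that $\ass$ is body-saturated and unfounded-free), yields $\es{L}{\lp} \setminus \Fass = \{\beta\}$.

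For the converse, assume $\es{L}{\lp} \setminus \Fass = \{\beta\}$, fix an arbitrary atom $p \in L$ together with an arbitrary path $\pi = v_0, \dots, v_k$ from $\top = v_0$ to $p = v_k$ in $\flow{\lp}{\ass}$, and show that $\beta$ occurs on $\pi$ (so $\beta$ dominates $p$, and, $p$ being arbitrary, all of $L$). Let $i$ be the least index with $v_i \in L$ (well defined since $v_0 = \top \notin L$) and put $\gamma = v_{i-1}$. Since the predecessors of an atom node are precisely the body nodes in $\bodies{v_i}\!\setminus\!\Fass$, the node $\gamma$ is a body with $\gamma \notin \Fass$, and there is a rule $r \in \lp$ with $\head{r} = v_i$ and $\body{r} = \gamma$. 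The argument then splits on whether $\pbody{r} \cap L = \emptyset$.

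If $\pbody{r} \cap L = \emptyset$, then $\gamma = \body{r}$ is an external support of $L$, hence $\gamma \in \es{L}{\lp}\!\setminus\!\Fass = \{\beta\}$, so $\beta = \gamma$ lies on $\pi$ and we are done. Otherwise pick $q' \in \pbody{r} \cap L$. Then $\dg{\lp}$ contains the edge $\gamma \to v_i$ (from $r$), the edge $q' \to \gamma$ (from $q' \in \pbody{r}$), and a path from $v_i$ to $q'$ lying inside $L$; thus $\gamma$ sits on a directed cycle through $v_i$, so $v_i \in \scc{\gamma}$, and since $L$ is a loop this forces $L \subseteq \scc{\gamma}$. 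Consequently $\scc{\gamma} \cap \pbody{r} \supseteq L \cap \pbody{r} \ni q'$ is non-empty, so by the definition of $\phi$ we have $\phi(\gamma) = \scc{\gamma} \cap \pbody{r}$, and component-unarity ($|\pbody{r} \cap \scc{\body{r}}| \le 1$) makes $\phi(\gamma)$ the singleton $\{q'\}$ with $q' \in L$. Body-saturation of $\ass$ together with $\gamma \notin \Fass$ gives $q' \notin \Fass$, so $q'$ is the unique predecessor of $\gamma$ in $\flow{\lp}{\ass}$. Since $\gamma \neq \top$ (as $\top$ is not a body node) the node $v_{i-2}$ exists, and being a predecessor of $\gamma$ it must equal $q' \in L$ --- contradicting minimality of $i$. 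Hence this case is impossible, which completes the converse.

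I expect the second case to be the crux. The delicate points are (i) checking that $\gamma$ lies on a cycle through a member of $L$, so that $L \subseteq \scc{\gamma}$ and, crucially, that the side condition $\scc{\gamma} \cap \pbody{r} \neq \emptyset$ in the definition of $\phi$ is met --- only then does $\phi$ collapse to the component; and (ii) exploiting the choice of $v_i$ as the \emph{first} atom of $L$ along $\pi$ to turn the uniqueness of $\gamma$'s flowgraph-predecessor (forced by component-unarity and body-saturation) into the contradiction. The forward direction and the first case are routine given Theorem~\ref{thm:correctness} and the definition of external support.
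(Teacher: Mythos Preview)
Your proof is correct. The forward direction is identical to the paper's (an appeal to Theorem~\ref{thm:correctness}). For the converse, however, you take a genuinely different route: the paper constructs an explicit support cut $c=(S,W)$ with $W = L \cup \{\body{r}\mid r\in\lp,\ L\cap\pbody{r}\neq\emptyset\}\cup(\bodies{\lp}\cap\Fass)$ and verifies that $\front{c}\subseteq\{\beta\}$, thereby reusing the cut machinery from Lemma~\ref{lemma:correctness}; you instead argue directly on paths, locating the first node of $L$ along an arbitrary $\top$--$p$ path and showing its predecessor body must be $\beta$. The key technical step is the same in both --- component-unarity forces $\phi(\gamma)$ to be a singleton inside $L$ whenever $\gamma$ has a positive atom in $L$ --- but you deploy it to contradict minimality of the entry index, while the paper deploys it to show every body in $W\setminus\Fass$ has its sole predecessor in $W$. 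Your argument is slightly more elementary (it never names a cut), whereas the paper's is more in keeping with the support-cut framework it has just set up; both are equally short.
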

\begin{proof}
The implication $(\Rightarrow)$ holds by Theorem~\ref{thm:correctness}.
We have $(\Rightarrow)$ by Theorem~\ref{thm:correctness}.
It remains to show $(\Leftarrow)$ that if $\es{L}{\lp}\!\setminus\!\Fass = \{\beta\}$ then~$\beta$ dominates all atoms in~$L$ in~$\flow{\lp}{\ass}$.
Let~$\es{L}{\lp}\!\setminus\!\Fass = \{\beta\}$.
Construct a cut~$c=(S,W)$ of~$\flow{\lp}{\ass}$ where~$W = L \cup \{ \body{r}\!\mid\!r\in\lp,\ L \cap \pbody{r} \neq \emptyset \} \cup (\bodies{\lp} \cap \Fass)$, i.e., all atoms in~$L$ and all bodies that cannot provide external support to~$L$ are in~$W$, and all other nodes of~$\flow{\lp}{\ass}$, including~$\beta$, are in~$S$.
Then, for every $r\in\lp$ such that~$\body{r}\not\in\Fass$, if~$\head{r}\!\in\!W$ then either~$W \cap \pbody{r} \neq \emptyset$ or $\body{r} = \beta$.
Since~$\lp$ is component-unary, the only predecessor of a body node in~$W\!\setminus\!\Fass$ is an atom in~$W$, and since~$\es{L}{\lp}\!\setminus\!\Fass = \{\beta\}$, every predecessor of an atom node in~$W$ is either in~$W$ or equals~$\beta$, which is in~$S$. Hence, the cut~$c$ is a support cut with~$\front{c}\!\subseteq\!\{\beta\}$. By the assumption that~$\ass$ is body-saturated and unfounded-free, no node is disconnected and that every path in~$\flow{\lp}{\ass}$ from~$\top$ to any atom in~$L$ passes through~$\beta$.
In conclusion, the body node~$\beta$ dominates all atoms in~$L$ in~$\flow{\lp}{\ass}$.
\end{proof}
We can guarantee that our method simulates WFJ for unary logic programs.
\begin{theorem}
Given a unary logic program~$\lp$ and a body-saturated, unfounded-free assignment~$\ass$. Let~$U\!\subseteq\!\atoms{\lp}$ such that $U \cap \Tass \neq \emptyset$, and~$\beta\!\in\!\bodies{\lp}$.
If $\beta$~dominates all atoms in~$U$ in~$\flow{\lp}{\ass}$ if and only if $\es{U}{\lp}\!\setminus\!\Fass = \{\beta\}$.
\end{theorem}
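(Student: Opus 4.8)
The plan is to follow the template of the proof of Theorem~\ref{thm:completion}, keeping track of the two things that change: $\lp$ is now \emph{unary} rather than just component-unary, and $U$ is an arbitrary atom set rather than a loop. The implication $(\Rightarrow)$ is immediate: it is exactly the instance of Theorem~\ref{thm:correctness} obtained from the observations that a unary program is a logic program and that the hypotheses on $\ass$ and on $U$ match. All the work is in the converse, and I would do it by a direct argument on $\flow{\lp}{\ass}$ rather than by exhibiting a cut.

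For $(\Leftarrow)$, assume $\es{U}{\lp}\setminus\Fass=\{\beta\}$ and, for contradiction, that $\beta$ does not dominate some $p^\ast\in U$; fix a path $\top=v_0,v_1,\dots,v_n=p^\ast$ in $\flow{\lp}{\ass}$ that avoids $\beta$. Because $\top$ only points to bodies, bodies only to atoms, and atoms only to bodies, the $v_i$ alternate in kind, so $n$ is even, $v_i$ is a body for odd~$i$ and an atom for even $i\geq2$, and the edge $v_0\to v_1$ forces $\phi(v_1)=\emptyset$. The core is a downward induction on even~$i$ showing $v_i\in U$: the base case is $v_n=p^\ast\in U$, and for the step (even $i\geq4$, $v_i\in U$) the incoming edge $v_{i-1}\to v_i$ gives $v_{i-1}\in\bodies{v_i}\setminus\Fass$ with $v_{i-1}\neq\beta$; if $\pbody{v_{i-1}}\cap U=\emptyset$ the rule with head $v_i$ and body $v_{i-1}$ would place $v_{i-1}$ in $\es{U}{\lp}\setminus\Fass=\{\beta\}$, absurd, so $\pbody{v_{i-1}}\cap U\neq\emptyset$, and unarity forces $\pbody{v_{i-1}}=\{a\}$ with $a\in U$. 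A small observation then pins down $\phi$: the only edge into the body node $v_{i-1}$ in $\dg{\lp}$ comes from $a$, so if $v_{i-1}$ lies on a cycle then so does $a$, i.e. $a\in\scc{v_{i-1}}$; in either case $\phi(v_{i-1})=\{a\}\neq\emptyset$. Hence the predecessor of $v_{i-1}$ on the path lies in $\phi(v_{i-1})\setminus\Fass$, which must equal $\{a\}$ (were $a\in\Fass$ the body $v_{i-1}$ would have no predecessor, contradicting the path), so $v_{i-2}=a\in U$ and the induction proceeds. Finally, running the same step at $i=2$ with $v_2\in U$ yields $\phi(v_1)=\{a\}\neq\emptyset$, contradicting $\phi(v_1)=\emptyset$; so no such path exists and $\beta$ dominates every atom of~$U$.

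Two closing remarks. The unary hypothesis is genuinely needed and cannot be relaxed to component-unary here: along a loop the intermediate bodies visited by the backward walk share a strongly connected component with the surrounding atoms, so $\phi$ already trims $\pbody{}$ to a single atom and $|\pbody{}\cap\scc{\cdot}|\leq1$ suffices (as exploited in Theorem~\ref{thm:completion}), whereas for a general~$U$ those bodies can sit in trivial components where $\phi(\beta')=\pbody{\beta'}$, and only the global bound $|\pbody{\beta'}|\leq1$ stops the backward walk from branching out of~$U$ toward~$\top$. I expect the only delicate points to be (i)~the $\phi$-pinning observation above, and (ii)~the routine but easy-to-botch bookkeeping of the atom/body alternation along the path and the boundary index $i=2$. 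Note that body-saturation and unfounded-freeness of $\ass$ are not used in $(\Leftarrow)$ — if $p^\ast$ is unreachable from $\top$ then $\beta$ dominates it vacuously — and enter only through the appeal to Theorem~\ref{thm:correctness} in $(\Rightarrow)$; an alternative fully parallel to Theorem~\ref{thm:completion} would instead build the support cut $c=(S,W)$ with $W\cap\atoms{\lp}=U$ and $\beta\in S$, but the path argument is cleaner because it avoids deciding on which side of $c$ the false body nodes must fall.
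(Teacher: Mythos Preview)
Your argument is correct, and it takes a genuinely different route from the paper. The paper's (sketch of a) proof mirrors that of Theorem~\ref{thm:completion} verbatim: it constructs the support cut $c=(S,W)$ with $W = U \cup \{\body{r}\mid r\in\lp,\ U\cap\pbody{r}\neq\emptyset\} \cup (\bodies{\lp}\cap\Fass)$ and observes that, because~$\lp$ is unary, every body node in~$W\setminus\Fass$ has a \emph{single} predecessor that already lies in~$W$ (the component-unary hypothesis sufficed in Theorem~\ref{thm:completion} only because the body lay in the same SCC as the loop~$L$, which is no longer guaranteed for arbitrary~$U$). Your approach instead argues directly on paths: a hypothetical $\beta$-avoiding path from~$\top$ to~$p^\ast\in U$ is walked backwards, and unarity pins each predecessor down inside~$U$, until the first body~$v_1$ is forced to have $\phi(v_1)\neq\emptyset$, contradicting the edge $\top\to v_1$. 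The cut construction keeps the proof uniform with the surrounding theorems and reuses Lemma~\ref{lemma:correctness}; your path argument is more self-contained, sidesteps the bookkeeping of where false bodies sit in the cut, and makes explicit your (correct) observation that body-saturation and unfounded-freeness are needed only for the $(\Rightarrow)$ direction via Theorem~\ref{thm:correctness}. Your $\phi$-pinning step can be shortened: since $\pbody{v_{i-1}}=\{a\}$, both branches of the definition of~$\phi$ give $\phi(v_{i-1})=\{a\}$ immediately, so the SCC discussion is unnecessary.
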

\begin{proof}[Sketch]
The proof follows the one from Theorem~\ref{thm:completion}. Using the notation from its proof, the effect of~$\lp$ being a unary logic program is that for every set of atoms~$L\subseteq\atoms{\lp}$ the only predecessor of a body node in~$W$ is also an atom in~$W$.
\end{proof}
So far, we have restricted our attention to body nodes that dominate a set of atom nodes. In principle, however, any type of node can be a (strict) dominator. We will address dominators that are atom nodes in the next section.

\section{Well-Founded Domination}
We define an atom-equivalent of WFJ, that is, if a set of atoms~$U$ containing at-least one true atom, then any atom that appears positively in all external support of~$U$ must likewise be true.
\[
\wfd{\Omega}{\lp}{\ass} = \begin{cases}
\ass \cup \{\tass{p}\} &\!\!\text{if } U\!\in\!\Omega,\ q\!\in\!U \cap \Tass,\ \text{and} \\
                       &\!\!\es{U}{\lp}\!\setminus\!\Fass\!\subseteq\!\{ \body{r}\!\mid\!r\!\in\!\lp,\ p\!\in\!\pbody{r}\}\text{,} \\
\ass &\!\!\text{otherwise.}
\end{cases}
\]
As before, we consider the two alternatives~$\Omega\!=\!2^{\atoms{\lp}}$ (well-founded domination, WFD) and~$\Omega\!=\!{\loops{\lp}}$ (loop domination, LD).
We reuse the support flowgraph of a logic program and define a new form of cut to approximate consequences of WFD, following the strategy for approximating WFJ from the previous section.
\begin{definition}
Given a logic program~$\lp$ and an assignment~$\ass$. A cut~$c = (S,W)$ of~$\flow{\lp}{\ass}$ is an \emph{atom cut} if $\top\!\in\!S$, $\front{c}\!\subseteq\!\atoms{\lp}$, and $\back{c}\!\subseteq\!\bodies{\lp}$.
\end{definition}
The conditions~$\front{c}\!\subseteq\!\atoms{\lp}$ and $\back{c}\!\subseteq\!\bodies{\lp}$ for an atom cut~$c = (S,W)$ ensure that every predecessor and successor of an atom in~$W$ is also in~$W$.
Observe that,~$\front{c} \cap \Fass = \emptyset$ holds by definition of a flowgraph.
\begin{example} \label{example:wfd} Consider the logic program~$\lp$ given through the following set of rules:
\noindent\parbox[][][t]{.45\linewidth}{
\begin{tikzpicture}
	[inner sep=1pt, >=stealth]
	\node (t) at (0.25,0.5) {$\top$};
	\node (1) at (1.5,1.0) {$\{\dneg{} d\}$};
	\node (2) at (1.5,0.0) {$\{\dneg{} c\}$};
	\node (c) at (2.75,1)   {$c$};
	\node (b) at (2.75,0.5) {$b$};
	\node (d) at (2.75,0.0) {$d$};
	\node (3) at (4.25,1.0) {$\{b,\dneg{} c\}$};
	\node (4) at (4.25,0.0) {$\{b,\dneg{} d\}$};
	\node (a) at (5.75,0.5) {$a$};
	\draw [->] (t) -- (1);
	\draw [->] (t) -- (2);
	\draw [->] (1) -- (c);
	\draw [->] (2) -- (b);
	\draw [->] (2) -- (d);
	\draw [->] (b) -- (3);
	\draw [->] (b) -- (4);
	\draw [->] (3) -- (a);
	\draw [->] (4) -- (a);
	\draw [dashed] plot [smooth] coordinates { (3.25,-0.25) (3.20,0.5) (3.25,1.25) };
	\draw (3.35,-0.45) node {cut $c$};
\end{tikzpicture}
}
\hfill
\noindent\parbox[][][t]{.5\linewidth}{
\begin{align*}
a &\leftarrow b, \dneg{}c & a &\leftarrow b, \dneg{} d & b &\leftarrow \dneg{} c \\ c &\leftarrow \dneg{} d & d &\leftarrow \dneg{} c 
\end{align*}
Verify,~$c = (\{\top, b, c, d,$ $\{\dneg{} c\},$ $\{\dneg{} d\}\},$ $\{a,$ $\{b,\dneg{}c\},$ $\{b,$ $\dneg{}d\}\})$ is a support cut.
}
Observe that $b$ appears positively in all external support of~$\{a\}$, i.e., $\es{\{a\}}{\lp}\!=\!\{\{b,\dneg{}c\},\{b,\dneg{}d\}\}\!\subseteq\!\{\body{r}\!\mid\!r\!\in\!\lp,\ \front{c} \in \pbody{r}\}$.
\end{example}
The following lemma guarantees that every atom cut in~$\flow{\lp}{\ass}$ separates a set of atoms~$U$ from the set of atoms that appear positively in the external support of~$U$.
\begin{lemma} \label{lemma:wfd:correctness}
Given a logic program~$\lp$ and a body-saturated assignment~$\ass$.
If~$c=(S,W)$ is an atom cut of~$\flow{\lp}{\ass}$ then $\es{W\!\cap\!\atoms{\lp}}{\lp}\!\setminus\!\Fass\!\subseteq\!\{ \body{r}\!\mid\!r\!\in\!\lp,\ \front{c} \cap \pbody{r} \neq \emptyset \}$.
\end{lemma}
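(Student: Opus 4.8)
The plan is to mirror the proof of Lemma~\ref{lemma:correctness}, with front \emph{atoms} now playing the role that front bodies played there. Fix an arbitrary body $\beta\!\in\!\es{W\cap\atoms{\lp}}{\lp}\!\setminus\!\Fass$ together with a witnessing rule $r\!\in\!\lp$, so that $\body{r}=\beta$, $\head{r}\!\in\!W\cap\atoms{\lp}$, $\pbody{r}\cap W\cap\atoms{\lp}=\emptyset$, and $\beta\!\notin\!\Fass$. Since $\pbody{r}=\beta^+$ is determined by $\beta$, it suffices to exhibit a single atom of $\pbody{r}$ lying in $\front{c}$; this places $\beta$ in $\{\body{r}\!\mid\!r\!\in\!\lp,\ \front{c}\cap\pbody{r}\neq\emptyset\}$, and as $\beta$ was arbitrary the claimed inclusion follows.

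First I would push $\beta$ itself into $W$. Because $\beta\!\notin\!\Fass$, the body node $\beta$ lies in $\bodies{\head{r}}\!\setminus\!\Fass$ and is therefore a predecessor of the atom node $\head{r}$ in $\flow{\lp}{\ass}$. Since $c$ is an atom cut, $\front{c}\!\subseteq\!\atoms{\lp}$ forbids any body node in $S$ from having an edge into $W$; hence every predecessor of the atom node $\head{r}\!\in\!W$ must itself lie in $W$, and in particular $\beta\!\in\!W$.

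Next I would locate a positive predecessor of $\beta$ that sits in $S$. Body-saturation together with $\beta\!\notin\!\Fass$ implies that no atom of $\beta$ is assigned false, so $\phi(\beta)\!\setminus\!\Fass=\phi(\beta)$, and moreover $\beta$ has at least one predecessor in $\flow{\lp}{\ass}$. That predecessor cannot be $\top$: otherwise the edge from $\top\!\in\!S$ into $\beta\!\in\!W$ would put $\top$ into $\front{c}$, contradicting $\front{c}\!\subseteq\!\atoms{\lp}$. Therefore $\phi(\beta)\neq\emptyset$ and the predecessors of $\beta$ are exactly the atoms in $\phi(\beta)$. Since $\phi(\beta)\!\subseteq\!\beta^+=\pbody{r}$ and, by the choice of $r$, no atom of $\pbody{r}$ lies in $W$, every atom of $\phi(\beta)$ lies in $S$. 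Choosing any $a\!\in\!\phi(\beta)$, the node $a$ is an atom, lies in $S$, and has an edge into $\beta\!\in\!W$; hence $a\!\in\!\front{c}$, and since $a\!\in\!\beta^+=\pbody{r}$ we obtain $\front{c}\cap\pbody{r}\neq\emptyset$, as required.

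I expect the only delicate point to be the interplay of the $\top$ node with body-saturation: one must make sure that $\beta$ genuinely possesses a \emph{non-false positive} predecessor — rather than being fed only by $\top$ — so that this predecessor lands in $\front{c}\cap\pbody{r}$. This is precisely where the body-saturated hypothesis is used; the remaining steps are a routine unwinding of the definitions of atom cut, $\front{c}$, and $\phi$, and should present no difficulty.
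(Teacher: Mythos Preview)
Your proof is correct, but it proceeds differently from the paper's. The paper does not re-run the argument of Lemma~\ref{lemma:correctness} element by element; instead it \emph{reduces} to that lemma. Starting from the atom cut $c=(S,W)$, it shifts into $S$ the set $F=W\cap\{\body{r}\mid r\in\lp,\ \front{c}\cap\phi(\body{r})\neq\emptyset\}$ of bodies in $W$ having a predecessor in $\front{c}$, obtaining a new cut $c'=(S\cup F,\,W\setminus F)$. It then checks that $c'$ is a support cut with $\front{c'}\subseteq\{\beta\mid\front{c}\cap\beta^+\neq\emptyset\}$, applies Lemma~\ref{lemma:correctness} to $c'$, and finishes by noting $W\cap\atoms{\lp}=W'\cap\atoms{\lp}$. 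Your approach instead chases a single $\beta\in\es{W\cap\atoms{\lp}}{\lp}\setminus\Fass$ directly: you first force $\beta\in W$ via the condition $\front{c}\subseteq\atoms{\lp}$, then use body-saturation to exclude $\top$ and produce a positive predecessor $a\in\phi(\beta)\subseteq\beta^+$ lying in $S$, hence in $\front{c}$. Your argument is self-contained and slightly more elementary; the paper's buys a clean reuse of Lemma~\ref{lemma:correctness} and makes explicit the structural relationship between atom cuts and support cuts.
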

\begin{proof}
Let~$c=(S,W)$ be an atom cut of~$\flow{\lp}{\ass}$. Let~$F = W \cap \{\body{r}\!\mid\!r\in\lp,\ \front{c} \cap \phi(\body{r}) \neq \emptyset\}$, the set of bodies in~$W$ that have a predecessor in~$\front{c}$. Construct a cut~$c^\prime=(S^\prime, W^\prime)$ where $S^\prime = S \cup F$ and $W^\prime = W \!\setminus\! F$, i.e., all bodies in~$F$ are shifted to $S$. Thus, for all~$\beta\!\in\!\front{c^\prime}$ it holds that~$\front{c} \cap \beta^+ \neq \emptyset$. Next, recall that in a support flowgraph, any predecessor of an atom node is always a body, i.e., no other node has a predecessor in~$\front{c}$. Hence, we get~$\front{c^\prime}\subseteq\bodies{\lp}$ and therefore, $c^\prime$~is a support cut of~$\flow{\lp}{\ass}$. By Lemma~\ref{lemma:correctness}, $\es{W^\prime \cap \atoms{\lp}}{\lp}\!\setminus\!\Fass\!\subseteq\!\front{c^\prime}$. By construction of~$c^\prime$ we have $W \cap \atoms{\lp} = W^\prime \cap \atoms{\lp}$, and conclude that~$\front{c} \cap \beta^+ \neq \emptyset$ for every~$\beta\!\in\!\es{W \cap \atoms{\lp}}{\lp}\!\setminus\!\Fass$.
\end{proof}
Hence, the atoms in~$\front{c}$ provide an upper bound on the atoms that appear positively in all external support of atoms in~$W$.
In order to guarantee that~$\front{c}$ represents the intersection of all external support, we restrict our attention to atom cuts with a single member in~$\front{c}$, i.e., dominators. Then, we can approximate WFD.
\begin{theorem} \label{thm:wfd:correctness}
Given a logic program~$\lp$ and a body-saturated assignment~$\ass$. Let~$U\!\subseteq\!\atoms{\lp}$ such that $U \cap \Tass \neq \emptyset$, and~$p\!\in\!\atoms{\lp}\!\setminus\!U$.
If~$p$ dominates all atoms in~$U$ in~$\flow{\lp}{\ass}$ then $\es{U}{\lp}\!\setminus\!\Fass\!\subseteq\!\{ \body{r}\!\mid\!r\in\lp,\ p\!\in\!\pbody{r} \}$.
\end{theorem}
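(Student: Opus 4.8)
The plan is to follow the template of the proof of Theorem~\ref{thm:correctness}, but with Lemma~\ref{lemma:wfd:correctness} in place of Lemma~\ref{lemma:correctness}, since the single node that now separates $U$ from the rest is an atom $p$ rather than a body. Concretely, I would proceed in two steps. First, use the domination hypothesis to produce a suitable cut: since $p$ dominates every atom in $U$ in $\flow{\lp}{\ass}$, the stated characterisation of dominators by cuts yields a cut $c=(S,W)$ with $\top\in S$, $p\in S$, $\front{c}\subseteq\{p\}$, and $W\cap\atoms{\lp}=U$; because $p\in\atoms{\lp}$ this gives $\front{c}\subseteq\atoms{\lp}$, and, choosing $W$ also closed under the successors of the atoms it contains (each such successor being a body), we get $\back{c}\subseteq\bodies{\lp}$, so that $c$ is an atom cut. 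Second, apply Lemma~\ref{lemma:wfd:correctness} to $c$.

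The second step is routine. Lemma~\ref{lemma:wfd:correctness} needs only that $\ass$ be body-saturated --- unlike Theorem~\ref{thm:correctness} we do not need unfounded-freeness, since the claim here is a containment and not that the external support is non-empty. The lemma gives $\es{W\cap\atoms{\lp}}{\lp}\setminus\Fass\subseteq\{\body{r}\mid r\in\lp,\ \front{c}\cap\pbody{r}\neq\emptyset\}$. Substituting $\front{c}\subseteq\{p\}$ on the right and $W\cap\atoms{\lp}=U$ on the left collapses this to $\es{U}{\lp}\setminus\Fass\subseteq\{\body{r}\mid r\in\lp,\ p\in\pbody{r}\}$, which is the claim.

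The main obstacle is the first step, namely arguing that the cut witnessing the domination can be taken to be an atom cut whose $W$-side meets $\atoms{\lp}$ in exactly $U$. The constraint $\front{c}\subseteq\{p\}$ forces $W\cup\{p\}$ to be closed under predecessors in $\flow{\lp}{\ass}$; in particular no body of $W$ may have $\top$ as its only predecessor, which is where body-saturation is used, exactly as in the proof of Lemma~\ref{lemma:correctness} (a non-false body with $\phi(\body{r})=\emptyset$ would sit in $W$ with $\top$ among its predecessors, contradicting $\top\in S$). At the same time $\back{c}\subseteq\bodies{\lp}$ forces $W$ to be closed under the successors of its atoms. I would have to check that these two closure demands can be met simultaneously without enlarging $W\cap\atoms{\lp}$ beyond $U$; this bookkeeping is the analogue of the cut construction already used implicitly in the proofs of Theorems~\ref{thm:correctness} and~\ref{thm:completion}, and doing it carefully for atom cuts is the crux.
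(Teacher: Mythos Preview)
Your proposal is correct and follows essentially the same route as the paper: obtain an atom cut $c=(S,W)$ with $\front{c}\subseteq\{p\}$ and $W\cap\atoms{\lp}=U$ from the domination hypothesis, then invoke Lemma~\ref{lemma:wfd:correctness} and specialise. The paper's proof simply asserts the existence of such an atom cut in one line, whereas you spell out the closure conditions that make the cut an atom cut and flag the bookkeeping as the crux; your observation that unfounded-freeness is not needed here (only containment is claimed) is also correct and matches the paper's hypotheses.
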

\begin{proof}
Let~$p$ dominate all atoms in~$U$ in~$\flow{\lp}{\ass}$. Then, there is an atom cut~$c=(S,W)$ such that~$\front{c}\!\subseteq\!\{p\}$ and~$U = W \cap \atoms{\lp}$. By Lemma~\ref{lemma:wfd:correctness}, $\es{W \cap \atoms{\lp}}{\lp}\!\setminus\!\Fass\!\subseteq\!\{ \body{r}\!\mid\!r\in\lp,\ p\!\in\!\pbody{r} \}$ and therefore, $\es{U}{\lp}\!\setminus\!\Fass\!\subseteq\!\{ \body{r}\!\mid\!r\in\lp,\ p\!\in\!\pbody{r} \}$.
\end{proof}
\begin{example} Reconsider the logic program from Example~\ref{example:wfd}, where all external support of~$\{a\}$ contains~$b$, and~$b$ dominates~$a$. If $a$ is assigned true then WFD will set $b$ to true.\end{example}
Given a component-unary logic program, the following theorem guarantees that our technique can be used to simulate LD.
\begin{theorem} \label{thm:wfd:completion}
Given a component-unary logic program~$\lp$ and a body-saturated assignment~$\ass$. Let~$L\!\in\!\loops{\lp}$ such that~$L \cap \Tass \neq \emptyset$, and~$p\!\in\!\atoms{\lp}\!\setminus\!L$.
The atom node~$p$ dominates all atoms in~$L$ in~$\flow{\lp}{\ass}$ if and only if~$\es{L}{\lp}\!\setminus\!\Fass\!\subseteq\!\{ \body{r}\!\mid\!r\!\in\!\lp,\ p\!\in\!\pbody{r} \}$.
\end{theorem}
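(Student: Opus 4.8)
The plan is to prove the two implications separately, reusing the machinery already developed. The forward direction is immediate: apply Theorem~\ref{thm:wfd:correctness} with $U := L$. Its hypotheses (a body-saturated assignment, $L \cap \Tass \neq \emptyset$, and $p \in \atoms{\lp} \setminus L$) are exactly what we are given here, so domination of all atoms of $L$ by the atom node $p$ yields $\es{L}{\lp} \setminus \Fass \subseteq \{\body{r} \mid r \in \lp,\ p \in \pbody{r}\}$. As in Theorem~\ref{thm:wfd:correctness}, component-unarity plays no role in this direction.

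For the converse, I would mimic the proof of Theorem~\ref{thm:completion}, but organised around an \emph{atom} cut and using the body-shifting idea from the proof of Lemma~\ref{lemma:wfd:correctness}. Assume $\es{L}{\lp} \setminus \Fass \subseteq \{\body{r} \mid r \in \lp,\ p \in \pbody{r}\}$, and define the cut $c = (S,W)$ with $W = L \cup \{\body{r} \mid r \in \lp,\ \head{r} \in L\} \cup \{\body{r} \mid r \in \lp,\ L \cap \pbody{r} \neq \emptyset\} \cup (\bodies{\lp} \cap \Fass)$ and $S$ its complement; then $\top \in S$, and since $p \notin L$ also $p \in S$, while $W \cap \atoms{\lp} = L$. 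By the dominator characterisation recalled just before Theorem~\ref{thm:correctness}, it suffices to show that $c$ is an atom cut with $\front{c} \subseteq \{p\}$: then $p$ dominates every node of $W$, in particular every atom of $L$. The condition $\back{c} \subseteq \bodies{\lp}$ is routine, since the successors of an atom of $L$ are bodies positively depending on $L$ and hence lie in $W$; and body-saturation is invoked exactly as in Theorem~\ref{thm:completion} to rule out disconnected non-false body nodes in $W$, so that the paths from $\top$ we reason about actually exist.

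The crux---and where I expect the real work---is verifying $\front{c} \subseteq \{p\}$. A node of $S$ with an edge into $W$ must be an atom that is a predecessor of some body $\body{r} \in W$ with $\body{r} \notin \Fass$ and $\head{r} \in L$ (every other body in $W$ has all its predecessors in $W$ or, being false, contributes no such edge). For such an $r$, either $L \cap \pbody{r} \neq \emptyset$, in which case $\body{r}$ sits in the strongly connected component of $L$ and, by component-unarity, its single relativised predecessor $\phi(\body{r})$ is an atom of $L \subseteq W$; or $\body{r} \in \es{L}{\lp} \setminus \Fass$, whence $p \in \pbody{r}$ by assumption and one must argue that, after discarding false atoms, $\phi(\body{r}) = \{p\}$. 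This last point is where the component-unary hypothesis and the $\scc{\cdot}$-relativisation in the definition of $\flow{\lp}{\ass}$ have to be exploited: one needs that an external-support body feeding $L$ cannot offer a second non-false positive body atom as an alternative entry point into $L$, so that the cut-shifting step from the proof of Lemma~\ref{lemma:wfd:correctness} (pushing across the cut exactly the bodies whose predecessor lies in the current front) converts $c$ into a support cut with front $\{p\}$, as in Theorem~\ref{thm:completion}. Controlling $\phi$ on external-support bodies is the delicate heart of the argument; everything else parallels the bookkeeping in Theorems~\ref{thm:correctness} and~\ref{thm:completion}.
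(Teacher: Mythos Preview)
Your plan is the paper's plan: the forward direction is exactly Theorem~\ref{thm:wfd:correctness}, and for the converse the paper also exhibits an explicit cut $(S,W)$ of $\flow{\lp}{\ass}$ with $\top,p\in S$, $L\subseteq W$ and $\front{c}\subseteq\{p\}$, then invokes the dominator characterisation recalled before Theorem~\ref{thm:correctness}. The difference is in the choice of $W$, and one of your additions is a genuine error. Including $\bodies{\lp}\cap\Fass$ in $W$ does not work: in $\flow{\lp}{\ass}$ a false body $\beta$ loses its \emph{outgoing} edges (it is removed from every $\body{q}\setminus\Fass$) but keeps its \emph{incoming} edges $\phi(\beta)\setminus\Fass$, so a false body in $W$ may have a predecessor in $S$ that is neither $p$ nor in $L$, destroying $\front{c}\subseteq\{p\}$. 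Your parenthetical ``being false, contributes no such edge'' is therefore wrong; that component of $W$ was harmless in Theorem~\ref{thm:completion} only because there the cut is a \emph{support} cut with a body front. The paper instead takes
\[
W \;=\; L \cup \{\body{r}\mid r\in\lp,\ \pbody{r}\cap L\neq\emptyset\}\cup\{\body{r}\mid r\in\lp,\ p\in\pbody{r}\},
\]
so that every (non-false) predecessor of an atom of $L$ is automatically in $W$---either it meets $L$ positively, or it is external support and hence contains $p$---and the entire front analysis reduces to predecessors of bodies.

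The step you single out as the ``delicate heart'' is exactly where the paper is brief. From the hypothesis it asserts that for every $r$ with $\body{r}\notin\Fass$ and $\head{r}\in L$ one has either $\pbody{r}\cap L\neq\emptyset$ or $\pbody{r}=\{p\}$, and then uses component-unarity to conclude that a body node in $W$ has its (unique) predecessor either in $L$ or equal to $p$, giving $\front{c}\subseteq\{p\}$. No body-shifting in the style of Lemma~\ref{lemma:wfd:correctness} is performed. So your instinct that this is the crux is right, but the paper dispatches it by assertion rather than by the more elaborate $\phi$-analysis you sketch; to match the paper you would state that claim and proceed directly.
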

\begin{proof}
The implication $(\Rightarrow)$ holds by Theorem~\ref{thm:wfd:correctness}.
We have $(\Rightarrow)$ by Theorem~\ref{thm:wfd:correctness}.
Is remains to show $(\Leftarrow)$ that if $\es{L}{\lp}\!\setminus\!\Fass = \{ \body{r}\!\mid\!r\!\in\!\lp,\ p\!\in\!\pbody{r} \}$ then~$p$ dominates all atoms in~$L$ in~$\flow{\lp}{\ass}$.
Let~$\es{L}{\lp}\!\setminus\!\Fass\!\subseteq\!\{ \body{r}\!\mid\!r\!\in\!\lp,\ p\!\in\!\pbody{r} \}$.
Then, for every~$r\!\in\!\lp$ such that~$\body{r}\not\in\Fass$, if~$\head{r}\!\in\!L$ then either~$\pbody{r} \cap L \neq \emptyset$ or~$\pbody{r} = \{p\}$.
Construct a cut~$c = (S,W)$ of~$\flow{\lp}{\ass}$ where $W = L \cup \{\body{r}\!\mid\!r\!\in\!\lp,\ \pbody{r} \cap L \neq \emptyset \} \cup \{\body{r}\!\mid\!r\!\in\!\lp,\ p\!\in\!\pbody{r} \}$ and all other nodes of~$\flow{\lp}{\ass}$, including~$p$, are in~$S$. In particular, every predecessor of an atom node in~$W$ is in~$W$. Since~$\lp$ is component-unary, a predecessor of a body node in~$W$ is either an atom node in~$W$ or equals~$p$, which is in~$S$. Hence, the cut~$c$ is an atom cut with~$\front{c}\!\subseteq\!\{p\}$. In conclusion, $p$ dominates all atoms in~$L$ in~$\flow{\lp}{\ass}$.
\end{proof}
We can even simulate WFD if a unary logic program is given.
\begin{theorem}
Given a unary logic program~$\lp$ and a body-saturated assignment~$\ass$. Let~$U\!\subseteq\!\atoms{\lp}$ such that~$U \cap \Tass \neq \emptyset$, and~$p\!\in\!\atoms{\lp} \setminus U$.
The atom node~$p$ dominates all atoms in~$U$ in~$\flow{\lp}{\ass}$ if and only if~$\es{U}{\lp}\!\setminus\!\Fass\!\subseteq\!\{ \body{r}\!\mid\!r\!\in\!\lp,\ p\!\in\!\pbody{r} \}$.
\end{theorem}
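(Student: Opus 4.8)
The plan is to mirror the proof of Theorem~\ref{thm:wfd:completion}, the only genuine change being that the set of atoms need no longer be a loop. The direction $(\Rightarrow)$ is immediate from Theorem~\ref{thm:wfd:correctness}. For $(\Leftarrow)$, I would assume $\es{U}{\lp}\!\setminus\!\Fass\!\subseteq\!\{\body{r}\!\mid\!r\!\in\!\lp,\ p\!\in\!\pbody{r}\}$ and reuse exactly the cut from that proof: set $W = U \cup \{\body{r}\!\mid\!r\!\in\!\lp,\ \pbody{r}\cap U \neq \emptyset\} \cup \{\body{r}\!\mid\!r\!\in\!\lp,\ p\!\in\!\pbody{r}\}$, and let $S$ contain all remaining nodes of~$\flow{\lp}{\ass}$; note $\top\!\in\!S$, and since $p\!\notin\!U$ also $p\!\in\!S$ and $W\cap\atoms{\lp}=U$.

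Next I would verify that $c=(S,W)$ is an atom cut with $\front{c}\!\subseteq\!\{p\}$. For $\back{c}\!\subseteq\!\bodies{\lp}$: every successor of an atom $q\!\in\!U$ is a body $\body{r}$ with $q\!\in\!\pbody{r}$, hence $\pbody{r}\cap U\neq\emptyset$ and $\body{r}\!\in\!W$. For the front: a body node in $S$ with an edge into $W$ would be a predecessor of some atom $q\!\in\!U$, i.e.\ a body $\beta\!\in\!\bodies{q}\!\setminus\!\Fass$; if its rule satisfies $\pbody{r}\cap U\neq\emptyset$ then $\beta\!\in\!W$, and otherwise $\beta\!\in\!\es{U}{\lp}\!\setminus\!\Fass$, so by hypothesis $p\!\in\!\pbody{r}$ and again $\beta\!\in\!W$ --- so no body node lies in $\front{c}$. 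An atom node $a\!\in\!S$ with an edge into $W$ would be a predecessor of some body $\beta=\body{r}\!\in\!W$, i.e.\ $a\!\in\!\phi(\beta)\!\setminus\!\Fass$; this is the one place the hypothesis on~$\lp$ is used: $|\pbody{r}|\leq 1$ forces $\phi(\beta)$ to be a subset of a singleton, so $\pbody{r}=\{a\}$, whence $\pbody{r}\cap U\neq\emptyset$ would put $a\!\in\!U\!\subseteq\!W$, contradicting $a\!\in\!S$; thus $\beta\!\in\!W$ can only hold because $p\!\in\!\pbody{r}$, i.e.\ $a=p$. Hence $\front{c}\!\subseteq\!\{p\}\!\subseteq\!\atoms{\lp}$, so $c$ is an atom cut, and by the characterisation of dominators by cuts (stated before Theorem~\ref{thm:correctness}, applied to atom cuts as in the proof of Theorem~\ref{thm:wfd:completion}) the atom node~$p$ dominates every atom of $W\cap\atoms{\lp}=U$.

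The main point --- and the step I expect to need the most care --- is isolating where ``unary'' strictly strengthens ``component-unary'': in Theorem~\ref{thm:wfd:completion} one uses that a body $\beta$ with $\pbody{r}\cap L\neq\emptyset$ sits in a non-trivial strongly connected component (because $L$ is a loop), so that $\phi(\beta)=\scc{\beta}\cap\beta^{+}$ is at most a singleton lying inside~$L$; under the unary hypothesis $\phi(\beta)\subseteq\pbody{r}$ is at most a singleton for \emph{every} body, with no appeal to strong connectivity, so the argument goes through for an arbitrary $U$. The remaining care is routine bookkeeping: bodies outside any strongly connected component, and bodies whose single positive atom is false (then $\beta\!\in\!\Fass$ by body-saturation and $\beta$ has no flowgraph-predecessor at all, hence cannot contribute to $\front{c}$), must all be checked to behave as claimed.
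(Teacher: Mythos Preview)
Your proposal is correct and follows essentially the same approach as the paper: the paper gives only a sketch, stating that the proof mirrors Theorem~\ref{thm:wfd:completion} and that the effect of unarity is that for \emph{every} set of atoms the predecessor of a body node in~$W$ is either an atom in~$W$ or equals~$p$ --- precisely the point you isolate in your final paragraph. Your write-up simply unpacks the details the paper leaves implicit, including the $\back{c}$ check and the handling of false atoms, and correctly identifies where ``unary'' replaces the loop hypothesis used for component-unary programs.
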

\begin{proof}[Sketch]
The proof follows the one from Theorem~\ref{thm:wfd:completion}. Using the notation from its proof, the effect of~$\lp$ being a unary logic program is that for every set of atoms~$L\subseteq\atoms{\lp}$ the predecessor of a body node in~$W$ is an atom in~$W$~or equals~$p$.
\end{proof}

\section{Propagating Reachability in ASP} \label{sec:reach}
We want to analyse the impact of propagating ASP inference on the conditions represented by a logic program. These conditions are best studied in terms of constraints over finite domain variables (cf. CSP;\cite{robewa06a}).
Let~$V$ be a finite set of \emph{(domain) variables} where each variable~$v \in V$ has an associated finite domain~$dom(v)$. A \emph{constraint}~$c$ is a $k$-ary relation on the domains of $k$~variables given by~$scope(c)\!\in\!V^k$.
A \emph{(domain variable) assignment} is a function~$\cassignment$ that assigns to each variable a value from its domain. For an assignment~$\cassignment$, a constraint~$c$ is called \emph{domain consistent} if when any~$v\!\in\!scope(c)$ is assigned any value, there exist values in the domains of the variables in~$scope(c)\!\setminus\!\{ v\}$ such that $\cassignment(scope(c))\!\in\!c$, i.e., $c$~is \emph{satisfied}.
We will consider variables that represent a directed graph, called \emph{graph variables}, and sets of nodes, called \emph{node set variables}. Following~\cite{dodedu05a}, the domain of a graph variable is given via graph inclusion.
Graph inclusion defines a partial ordering among graphs, e.g., given two graphs $G = (V,E)$ and $G^\prime = (V^\prime,E^\prime)$, $G\!\subseteq\!G^\prime$ if $V\!\subseteq\!V^\prime$ and $E\!\subseteq\!E^\prime$.
Then, the domain of a graph variable~$v$ is defined as the lattice of graphs included between the greatest lower bound~$lb(v)$ and the least upper bound~$ub(v)$ of the lattice.
The domain of a node set variable is bounded by the subsets of nodes in the graph, and we denote the greatest lower bound by~$lb(v)$ and the least upper bound by~$ub(v)$.
If for a domain variable~$v$ the associated domain is a singleton, we say that $v$~is \emph{fixed} and simply write~$v$ instead of~$lb(v)$ or~$ub(v)$.

Reachability is a relevant condition in many ASP applications.
Given a graph variable~$G$, and node set variables~$S$ and~$N$, the constraint~$\reachable{G}{S}{N}$ states that $N$~is the set of nodes reachable from some node in~$S$, i.e., the subgraph induced by~$N$ is connected.
For encoding reachability into ASP we use atoms of the form~$\edge{Y,X}$, $\start{X}$, and $\reached{X}$ to capture the membership of edges in~$G$, and nodes in~$S$ and~$N$, respectively. Nodes in~$G$ are given implicitly through the edges in~$G$. We denote by~$\reach{G}{S}$ the following rules:
\begin{align*}
\forall\ X\!\in\!ub(S): & \hspace{1cm} \reached{X} \leftarrow \start{X} \\
\forall\ (Y,X)\!\in\!ub(G): & \hspace{1cm} \reached{X} \leftarrow \reached{Y},\ \edge{Y,X}
\end{align*}
We assume that rules for~$\edge{Y,X}$ and $\start{X}$ are provided elsewhere, as we restrict our attention to reachability.
It is easy to verify that a node~$t\!\in\!N$ if and only if $\tass{\reached{t}}$~is in an assignment representing an answer set of the resulting program.
In the following, we study the impact of propagation on~$\reach{G}{S}$ in terms of consistency on reachability.
We start with the special case where~$G$ and~$S$ are fixed.
UP and FL on~$\reach{G}{S}$ prunes all values of~$N$.
\begin{theorem} \label{wfn:dc}
If~$G$ and~$S$ are fixed, then UP and FL on~$\reach{G}{S}$ achieve domain consistency on~$\reachable{G}{S}{N}$.
\end{theorem}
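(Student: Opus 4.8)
The plan is to show both directions of domain consistency for $\reachable{G}{S}{N}$ when $G$ and $S$ are fixed, by characterizing exactly which literals $\up{\lp}{\cdot}$ and FL fix on the program $\reach{G}{S}$, and matching this to graph reachability. Write $\lp = \reach{G}{S}$. Since $G$ and $S$ are fixed, every atom $\edge{Y,X}$ with $(Y,X)\in G$ and every atom $\start{X}$ with $X\in S$ is true (and the rest false); these are supplied "elsewhere" and I would treat them as facts, so UP immediately fixes them. The only atoms left to determine are the $\reached{X}$ atoms, and the only bodies are $\{\start{X}\}$ and $\{\reached{Y},\edge{Y,X}\}$.

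First I would run UP to a fixpoint. I claim that $\up{\lp}{\cdot}$ derives $\tass{\reached{X}}$ exactly for those $X$ reachable in $G$ from $S$: the base rule $\reached{X}\leftarrow\start{X}$ forces $\tass{\reached{X}}$ for $X\in S$ (the body $\{\start{X}\}$ is true, hence by $\Delta_\beta$ the body atom is true, hence by $\Delta_p$ the head is true), and then inductively, whenever $\reached{Y}$ is true and $(Y,X)\in G$ (so $\edge{Y,X}$ is true), the body $\{\reached{Y},\edge{Y,X}\}$ becomes true and forces $\tass{\reached{X}}$. This is a straightforward induction on distance from $S$. Conversely, the set $N^{c}$ of non-reachable nodes: I would argue that after UP fixes all true $\reached$-atoms and their true bodies, body-saturation (the fixpoint property of UP, as noted in the Preliminaries) makes every body $\{\reached{Y},\edge{Y,X}\}$ with $Y\notin\mathit{reach}(S)$ false, because $\reached{Y}$ is not yet assigned — wait, UP alone will not set $\reached{Y}$ false. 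This is exactly where FL enters.

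Second I would invoke FL (forward loop inference). The set $U = \{\,\reached{X}\mid X\text{ not reachable from }S\text{ in }G\,\}$ is an unfounded set with respect to the UP-fixpoint: every rule with head in $U$ is either a base rule $\reached{X}\leftarrow\start{X}$ with $X\notin S$ (body false), or a rule $\reached{X}\leftarrow\reached{Y},\edge{Y,X}$; if $Y$ is reachable then $X$ is reachable, contradiction, so $Y\in$ the non-reachable set too, i.e. $\reached{Y}\in U$, so $\pbody{r}\cap U\neq\emptyset$ and this rule contributes nothing to $\es{U}{\lp}$. Hence $\es{U}{\lp}=\emptyset\subseteq\Fass$, so $U$ is unfounded; moreover $U$ is (a union of) loops or can be handled loop-by-loop via the standard equivalence that fixpoint FL+UP simulates WFN+UP, cited in the Preliminaries. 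Therefore FL sets $\fass{\reached{X}}$ for all non-reachable $X$. After this, the assignment is total on the $\reached$-atoms and assigns $\tass{\reached{X}}$ iff $X\in N$ in the unique solution, which by the remark just before the theorem ("$t\in N$ iff $\tass{\reached{t}}$") is precisely the reachability relation. Since the propagation fixes every $\reached$-atom to its forced value and $G,S$ are already fixed, no value of $N$ consistent with $\reachable{G}{S}{N}$ is pruned and every inconsistent value is pruned — i.e. domain consistency holds.

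The main obstacle I anticipate is the careful bookkeeping around the unfounded/loop argument: I must confirm that the non-reachable $\reached$-atoms really do form (a union of) loops in $\dg{\lp}$ so that FL — which only targets $\Omega=\loops{\lp}$ — actually fires, rather than needing full WFN. The clean way around this is to lean on the cited result that the fixpoint of FL together with UP coincides with the fixpoint of WFN together with UP (Gebser–Schaub); then it suffices to exhibit the single unfounded set $U$ above and appeal to WFN, and the FL claim follows for free at fixpoint. A secondary subtlety is justifying that $\edge$/$\start$ atoms being "fixed" translates into UP actually assigning them — I would state the mild assumption that the rules defining them are just facts for the values in $G$ and $S$, which is the natural reading of "$G$ and $S$ are fixed."
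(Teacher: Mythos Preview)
Your proposal is correct. The route differs from the paper's in its decomposition, though the underlying content is the same.

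The paper argues node-by-node from the fixpoint~$\ass$: for each~$v$, if $\fass{\reached{v}}\notin\ass$ then the singleton $\{\reached{v}\}$ still has non-false external support, and chasing this support backward yields a path to a start node, so $v$ is reachable; conversely, if $\tass{\reached{v}}\notin\ass$ then the nogoods in~$\Delta_{\reached{v}}$ force $v\notin S$ and all incoming bodies false, and one checks that loops through~$v$ are either unfounded or supported without~$v$, so $v$ can be excluded. This is the generic domain-consistency template (each unpruned value has a witness), specialised to a situation where the witness is unique.

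You instead partition the $\reached$-atoms up front into reachable and non-reachable and handle each class wholesale: UP forward-chains the reachable ones to true, and the entire non-reachable set is exhibited as a single unfounded set, with the cited FL+UP $=$ WFN+UP fixpoint equivalence discharging the loop-restriction issue you correctly flag. This is more operational and arguably cleaner here, since when $G$ and $S$ are fixed the constraint has a unique solution and totality of the fixpoint is the natural target; the paper's per-node case split is slightly more general in form but does not buy anything extra in this setting. Your handling of the FL-versus-WFN subtlety via the Gebser--Schaub equivalence is the right move and matches how the paper's Preliminaries set things up.
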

\begin{proof}
Assume UP and FL reached the fixpoint~$\ass$, and~$\ass$ is conflict-free.
Let~$v \in G$.
If~$\fass{\reached{v}}\not\in\ass$ then $\es{\{v\}}{\reach{G}{S}}\setminus\Fass \neq \emptyset$. Hence, either~$\fass{\{\start{v}\}} \not\in\ass$ or $\fass{\{\reached{u}, \edge{u,v}\}} \not\in\ass$ for some $(u,v) \in G$, i.e., either $v \in S$ or $v$~has a predecessor~$u$ that is reached. By successively applying the same argument, we obtain loops, each of which concludes in a start node. Hence, there is an assignment with~$v \in N$ satisfying the constraint.
On the other hand, if~$\tass{\reached{v}}\not\in\ass$ then the nogood $\{\fass{\reached{v}}, \tass{\{\start{v}\}}\} \in \Delta_{\reached{v}}$ guarantees that $v \not\in S$. Similarly, for every~$(u,v) \in G$, the nogood $\{\fass{\reached{v}}, \tass{\{\reached{u}, \edge{u,v}\}}\} \in \Delta_{\reach{G}{S}}$ guarantees that every predecessor~$u$ is disconnected. Moreover, the atoms in each loop~$L$ starting from~$v$ are either disconnected, i.e., we have $\es{L}{\reach{G}{S}}\setminus\Fass = \emptyset$, or (since the graph is fixed) their subsets are guaranteed external support via a path that does not go through~$v$.
Hence, there is an assignment with~$v \not\in N$ satisfying the constraint.
In conclusion, $\reachable{G}{S}{N}$ is domain consistent.
\end{proof}
We now turn our attention to another special case of~$\reachable{G}{S}{N}$, that is, the value of~$N$ is fixed.
Then, UP and WFN on~$\reach{G}{S}$ can hinder propagation, in general, and the construction of a counter example is easy.
However, we can guarantee that the addition of WFJ inference prunes all values.
\begin{theorem} \label{wfj:dc}
If $N$ is fixed then UP and BL on~$\reach{G}{S}$ achieve domain consistency on~$\reachable{G}{S}{N}$.
\end{theorem}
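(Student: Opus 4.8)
The plan is to dualize the proof of Theorem~\ref{wfn:dc}. The key structural fact is that, with $N$ fixed, the constraint $\reachable{G}{S}{N}$ is monotone in the edge and start variables: adding edges or start nodes only enlarges the set reached from $S$, and (as the forced consequences of $N$ being fixed will show) never lets it exceed $N$. Hence it suffices to produce a single ``maximal'' solution and, for each edge or start variable, one solution taking the value ``out'' whenever that value has not been pruned. Throughout I assume UP and BL have reached the fixpoint $\ass$ with $\ass$ conflict-free, and --- since the statement is only of interest then --- that $\reachable{G}{S}{N}$ is satisfiable.

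First I would record what UP forces. For $t\in N$ we have $\tass{\reached t}\in\ass$; for $t\notin N$ we have $\fass{\reached t}\in\ass$, hence $\fass{\start t}\in\ass$ (when $t\in ub(S)$) and $\fass{\edge{u,t}}\in\ass$ for every $(u,t)\in ub(G)$ with $u\in N$. Then I would set $G^+ = \{(u,v)\in ub(G)\mid \fass{\edge{u,v}}\notin\ass\}$ and $S^+ = \{v\in ub(S)\mid \fass{\start v}\notin\ass\}$ --- the largest choices compatible with $\ass$ --- and prove two claims. Soundness: the set of nodes reachable from $S^+$ in $G^+$ is a subset of $N$, by a short induction along a path using the forced consequences above. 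Completeness: every $t\in N$ is reachable from $S^+$ in $G^+$. Together these make $(G^+,S^+)$ a solution, which simultaneously supports every not-yet-excluded value ``$(u,v)\in G$'' and ``$v\in S$''.

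For a value ``out'', say for an edge $(u,v)$ with $\tass{\edge{u,v}}\notin\ass$ (if $\fass{\edge{u,v}}\in\ass$ then $(u,v)\notin G^+$ and $(G^+,S^+)$ already witnesses it), I would take $(G^+\setminus\{(u,v)\},S^+)$; soundness is inherited, and for completeness I argue by contradiction. If some $t\in N$ became unreachable, let $U$ be the set of $N$-nodes unreachable from $S^+$ in $G^+\setminus\{(u,v)\}$, and let $C$ be a source strongly connected component of the subgraph of $ub(G)$ induced on $U$. One checks $u\notin U$ and $v\in C$, and that every $ub(G)$-edge entering $C$ from outside $C$ corresponds to a body that is false under $\ass$ --- except the body $\{\reached u,\edge{u,v}\}$. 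Hence the loop $L=\{\reached w\mid w\in C\}$ meets $\Tass$ and has $\{\reached u,\edge{u,v}\}$ as its only non-false external support, so BL sets that body true and UP then derives $\tass{\edge{u,v}}$, contradicting the hypothesis. The analogous argument treats start variables (with $\{\start v\}$ playing the role of the unique support), and the Completeness claim for $(G^+,S^+)$ itself is the same argument with $U=\{t\in N\mid t\text{ not reachable from }S^+\text{ in }G^+\}$ and $C$ its source SCC: then $L$ would be a non-empty unfounded set of true atoms, which no assignment corresponding to an answer set and extending $\ass$ can contain.

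I expect the technical heart --- and the main obstacle --- to be exactly these source-SCC arguments: one must extract an actual \emph{loop} rather than an arbitrary atom set so that BL (not the stronger, quadratic WFJ) already fires, and one must verify that all edges entering the chosen component from outside yield false bodies apart from the single distinguished support. It is worth noting in passing that $\reach{G}{S}$ is component-unary --- the edge atoms carry no positive dependency within any strongly connected component --- so by Theorem~\ref{thm:completion} support-flowgraph dominators capture BL exactly; this offers an alternative, purely graph-theoretic packaging of the same bottleneck reasoning.
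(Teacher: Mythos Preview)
Your proposal is correct and aligned with the paper's approach --- both split into ``in'' and ``out'' directions per edge (and start) variable, using UP for the former and BL for the latter --- but your argument is considerably more explicit. The paper's proof, for the ``out'' direction, merely observes that $\es{\{\reached v\}}{\reach{G}{S}}\setminus\Fass\neq\{\{\reached u,\edge{u,v}\}\}$ and then writes ``By successively applying the same argument, we obtain loops, each of which concludes in a node from~$S$'', without constructing a witness assignment or isolating a concrete loop. Your maximal solution $(G^{+},S^{+})$ and the source-SCC extraction make precise exactly what the paper leaves to the reader; the verification that $v$ lies in the (unique) source component of~$U$ and that every other body entering that component is already false is what is needed to invoke BL rather than the stronger WFJ. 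Your closing remark that $\reach{G}{S}$ is component-unary, connecting the result to Theorem~\ref{thm:completion}, is a useful observation that the paper's proof does not make.
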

\begin{proof}
Assume UP and BL result in the fixpoint~$\ass$, and~$\ass$ is conflict-free.
Let~$(u,v)\!\in\!ub(G)$.
If~$\tass{\edge{u,v}}\!\not\in\!\ass$ the nogood $\{ \tass{\{ \reached{u}, \edge{u,v}\}}, \fass{\reached{v}} \}\!\in\!\Delta_{\reach{G}{S}}$ guarantees that $(u,v)$~does not connect a node that is reached with a disconnected one. Hence, there is an assignment with~$(u,v) \in G$ satisfying the constraint.
On the other hand, if~$\fass{\edge{u,v}} \not\in \ass$ then $\es{\{v\}}{\reach{G}{S}}\setminus\Fass\!\neq\!\{ \{\reached{u}, \edge{u,v}\} \}$, i.e., if $v$~is reached then either~$v\!\in\!S$ or there is some other edge that can connect a reached node to~$v$. By successively applying the same argument, we obtain loops, each of which concludes in a node from~$S$. Hence, there is an assignment with~$(u,v)\!\not\in\!G$ satisfying the constraint.
The proof for any~$v\!\in\!ub(S)$ follows similar arguments. We conclude that $\reachable{G}{S}{N}$ is domain consistent.
\end{proof}
If the value of~$N$ is not fixed, however, domain consistency is not guaranteed. (Again, the construction of a counter example is easy.) Additional pruning is required. We can show that UP, FL, BL, LD, altogether propagate reachability efficiently.
\begin{theorem}
Propagating UP, FL, BL and LD on~$\reach{G}{S}$ achieves domain consistency on~$\reachable{G}{S}{N}$.
\end{theorem}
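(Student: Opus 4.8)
The plan is to combine the three special-case results already established (Theorems~\ref{wfn:dc} and~\ref{wfj:dc}) with the new atom-counterpart inference LD, and argue that together they cover every variable of the constraint: the edge variables in~$ub(G)$, the start-node variables in~$ub(S)$, and the reached-node variables in~$ub(N)$. For each such variable, and each of its two possible values, I must exhibit a consistent completion of the current assignment (equivalently, show the fixpoint assignment does not already forbid that value). First I would fix the fixpoint~$\ass$ of UP, FL, BL, LD and assume it is conflict-free; note that by construction $\reached{X}\!\in\!\Tass$ iff node~$X$ is forced into~$N$, and dually for $\Fass$.

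The \emph{edge} and \emph{start} variables are handled essentially as in the proof of Theorem~\ref{wfj:dc}: if $\tass{\edge{u,v}}\!\notin\!\ass$ then the completion nogood $\{\tass{\{\reached{u},\edge{u,v}\}},\fass{\reached{v}}\}$ shows no conflict arises from adding the edge, except in the case $\reached{u}\!\in\!\Tass$ and $\reached{v}\!\in\!\Fass$ --- but this is precisely where BL/FL have already acted, so that case cannot survive in~$\ass$; if $\fass{\edge{u,v}}\!\notin\!\ass$ then as in Theorem~\ref{wfj:dc} either $v$ is not forced reached, or BL (applied to the loop through $v$, which is a loop in $\dg{\reach{G}{S}}$) leaves an alternative external support, giving a reachability witness avoiding the edge. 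The \emph{reached} variables are the genuinely new part: if $\tass{\reached{v}}\!\notin\!\ass$, I use the argument from Theorem~\ref{wfn:dc} (FL forces a disconnected node false only when all of its external supports are false, so a ``not-reached'' completion exists); if $\fass{\reached{v}}\!\notin\!\ass$, I need a completion with $v\!\in\!N$ --- here I trace back through non-false edges, and the key point is that LD guarantees that whenever a loop $L$ through $v$ with $L\cap\Tass\neq\emptyset$ has all its external supports passing through a common reached node, that node is already true, so repeatedly following the forced-true chain terminates at a start node, yielding the witness.

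The observation making LD the right tool: $\reach{G}{S}$ is a \emph{component-unary} logic program --- each rule $\reached{X}\leftarrow\reached{Y},\edge{Y,X}$ has positive body $\{\reached{Y},\edge{Y,X}\}$, and within the strongly connected component of $\reached$-atoms the $\edge{Y,X}$ atoms are in their own trivial components, so $|\pbody{r}\cap\scc{\body{r}}|\le 1$. Hence Theorem~\ref{thm:wfd:completion} applies and LD is exactly captured by atom-dominators in $\flow{\reach{G}{S}}{\ass}$; symmetrically BL is captured by body-dominators via Theorem~\ref{thm:completion}. I would state this component-unarity explicitly as the first step after introducing~$\ass$, since it is what licenses treating LD and BL as the full loop-based inferences rather than approximations.

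The main obstacle I anticipate is the $\fass{\reached{v}}\!\notin\!\ass$ case: I must show that the LD fixpoint has propagated \emph{enough} true $\reached$-atoms that the back-tracing to a start node never gets stuck at a node all of whose incoming supports are false while the node itself is not yet false --- i.e., that the combination FL+LD has no ``gap'' where a node could be made reached only by a choice that LD should have forced. The careful argument is that any such stuck node would sit in a loop whose surviving external supports all share a dominating reached-atom (by component-unarity and the structure of $\reach{G}{S}$, the only way external support can be ``bottlenecked''), and LD would then have already assigned that atom true, contradicting stuckness; making this bottleneck claim precise --- rather than merely plausible --- is the delicate step, and it is where I would spend the bulk of the proof, likely by an induction on the length of the back-trace path ordered by the partial order of the acyclic condensation of $\dg{\reach{G}{S}}$.
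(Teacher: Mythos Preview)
Your overall decomposition---fix a conflict-free fixpoint~$\ass$, then treat the edge, start, and reached variables separately---matches the paper, and your handling of the edge and start cases via UP and BL (following Theorem~\ref{wfj:dc}) is correct. The component-unarity observation is also correct, though in the paper it appears \emph{after} the theorem as justification for why the dominator approximation simulates BL and LD; the theorem itself is stated for the inference rules BL and LD directly, so component-unarity is not needed inside the proof.

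The genuine gap is that you have swapped the roles of FL and LD in the two $\reached{v}$ cases. When $\fass{\reached{v}}\notin\ass$ you want a completion with $v\in N$: this is the \emph{easy} direction, and FL alone suffices. Since $\reached{v}$ is not false, FL guarantees that $\{\reached{v}\}$---and indeed every loop through $\reached{v}$---has non-false external support, so back-tracing through non-false bodies cannot get stuck and reaches a start node. Your worry about ``a node all of whose incoming supports are false while the node itself is not yet false'' is impossible precisely because FL is at a fixpoint; LD contributes nothing here. Conversely, when $\tass{\reached{v}}\notin\ass$ you want a completion with $v\notin N$: this is the \emph{hard} direction, and it is where LD is needed, used contrapositively. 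The danger is that some loop~$L$ with $L\cap\Tass\neq\emptyset$ might have all its surviving external support going through $\reached{v}$, i.e., $\es{L}{\reach{G}{S}}\setminus\Fass\subseteq\{\body{r}\mid\reached{v}\in\pbody{r}\}$; but then LD would already have inferred $\tass{\reached{v}}$, contradiction. Hence every forced-true loop has external support not through $v$, and a $v\notin N$ completion exists. Your appeal to ``the argument from Theorem~\ref{wfn:dc}'' for this case is not enough: that proof relied on $G$ being fixed to dismiss the loop obstruction, and with $G$ variable the loop obstruction is exactly what LD is there to remove.
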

\begin{proof}[Sketch]
Assume UP, FL, BL and LD reached the fixpoint~$\ass$, and~$\ass$ is conflict-free.
For any edge $(u,v) \in ub(G)$, the proof follows the one for Theorem~\ref{wfj:dc}, i.e., UP ensures that when assigning~$(u,v) \in G$ the edge does not connect a node that is reached with a disconnected one, and BL guarantees that when assigning~$(u,v) \not\in G$ there is some other way to connect to~$v$ if $v \in N$.
Similarly, for any node $v \in ub(S)$, UP ensures that when assigning~$v \in S$ the node $v$ is reached, and BL guarantees that when assigning~$v \not\in S$ there is some path connecting a start node to~$v$ if $v \in N$.
Moreover, following the arguments in the proof of Theorem~\ref{wfn:dc}, FL removes nodes from~$ub(N)$ that cannot be reached in any satisfying domain variable assignment, and for every node~$v \in ub(N)$, if~$\tass{\reached{v}}\not\in\ass$ then $v \not\in lb(S)$ and every predecessor can be disconnected.
It remains to show that if~$\tass{\reached{v}}\not\in\ass$ then the atoms in each loop~$L$ starting from~$v$ can be disconnected or reached via some path that does not go through~$v$. This is guaranteed by LD, i.e., we have $\es{L}{\lp}\!\setminus\!\Fass\!\not\subseteq\!\{ \body{r}\!\mid\!r\!\in\!\lp,\ \reached{v}\!\in\!\pbody{r} \}$.
Hence, there is an assignment with~$v \not\in N$ satisfying the constraint.
In conclusion, $\reachable{G}{S}{N}$ is domain consistent.
\end{proof}
While previous theorems establish practical relevance of BL and LD, recall that, to our knowledge, existing ASP solvers to not implement BL and LD.
However, our efficient approximations of WFJ and WFD can be used to simulate BL and LD, respectively, since $\reach{G}{S}$ results in a component-unary logic program.
If, in addition, the value of~$G$ is fixed such that all atoms in $\reach{G}{S}$ of the form~$\edge{Y,X}$ can be dropped, resulting in a unary logic program, then our method even simulates WFJ and WFD, respectively.

\paragraph{Corrigendum}
An earlier version of this paper~\cite{drwa13a} includes a theorem stating that there is no polynomial size logic programming encoding of reachability such that UP achieves domain consistency. The proof combined (1) a famous result from Karchmer and Wigderson showing that the smallest fan-out-1 monotone Boolean circuit for reachability is super-polynomial in the number of vertices in a graph~\cite{kawi90}, and (2) a result from Bessiere et al.~\cite{bekanawa09a} showing that there is a polynomial size encoding of a constraint into CNF-SAT such that applying unit propagation achieves domain consistency on the constraint if and only if a domain consistency propagator for the constraint can be computed by a polynomial size monotone circuit. However, this does not prove the theorem since fan-out-2 (monotone) circuits can achieve exponential compression in terms of size complexity over fan-out-1 (monotone) circuits~(cf.~\cite{savage98}).
In fact, one can construct a polynomial size logic programming encoding of reachability based on the idea of Warshall's algorithm for transitive closure of directed graphs~\cite{warshall62}.
%

\section{Experiments}
Implementing Tarjan's linear-time algorithm for finding all dominators in a flowgraph~\cite{geta04a} is a challenging engineering exercise as it relies on sophisticated data structures.
Hence, for practical reasons, we have integrated BL into the ASP solver~\emph{clasp}~(2.1.1) via failed-literal-detection and FL. This has high computational costs.
To compare with the state-of-the-art, i.e., using only UP and FL, we include the default setting of \emph{clasp} in our analysis.
We conducted experiments on search problems that make use of reachability conditions. Our benchmarks stem from the Second ASP Competition~\cite{devebogetr09a}.
The following definitions apply to Table~\ref{table:results} of results.
\textbf{UP+FL} denotes \emph{clasp}'s default setting, and \textbf{UP+FL+BL} denotes the setting that integrates BL.
In each benchmark class, \textbf{\#N} denotes the total number of instances and \textbf{\#S} denotes the number of instances for which the program terminated within the allowed time. \textbf{Time} denotes the time taken to compute all instances in the class that were solved in both settings. Similarly, \textbf{\#B} denotes the total number of branches and \textbf{\#C} denotes the number of conflicts during search, aggregated over all instances in the class that were solved in both settings.
Experiments were run on a Linux~PC, where each run was limited to 1200s time on a single 2.00~GHz core.

From the results shown in Table~\ref{table:results}, it can be concluded that information from BL prunes search dramatically: The additional propagation in~\textbf{UP+FL+BL} decreases the number of branches and conflicts by roughly one order of magnitude in comparison to~\textbf{UP+FL}. 
On the other hand, high computational costs of propagating BL via failed-literal-detection are clearly reflected in the run times of~\textbf{UP+FL+BL}.
These costs, however, can be drastically reduced by using Tarjan's linear-time algorithm, and by making the computation of dominators incremental. In conclusion, our experiments encourage the implementation of our techniques.
\begin{table}[t]
\centering
\caption{Experimental Data. \label{table:results}}
\vspace{-1em}
\begin{tabular}{lc@{\hspace{.4cm}}crrr@{\hspace{.4cm}}crrr}
\hline
                         &              & \multicolumn{4}{c@{\hspace{.4cm}}}{\textbf{UP+FL}}                         & \multicolumn{4}{c}{\textbf{UP+FL+BL}} \\
\textbf{Benchmark Class} & \textbf{\#N} & \multicolumn{1}{c}{\textbf{\#S}} & \multicolumn{1}{c}{\textbf{Time}} & \multicolumn{1}{c}{\textbf{\#B}} & \multicolumn{1}{c@{\hspace{.4cm}}}{\textbf{\#C}} & \multicolumn{1}{c}{\textbf{\#S}} & \multicolumn{1}{c}{\textbf{Time}} & \multicolumn{1}{c}{\textbf{\#B}} & \multicolumn{1}{c}{\textbf{\#C}} \\
\hline
Connected Dominating Set          & 21 & 20 &  202 & 11320.9k &  6339.2k & 20 & 3342 & 6887.4k & 3655.0k \\
Generalised Slitherlink           & 29 & 29 &    3 &    22.3k &     4.7k & 29 &    4 &    1.3k &    0.4k \\
Graph Partitioning                & 13 & 13 &  147 &  3159.4k &  2344.5k & 13 &  785 & 1138.5k &  810.2k \\
Hamiltonian Path                  & 29 & 29 &    1 &    44.0k &    17.9k & 29 &    8 &    6.1k &    2.9k \\
Maze Generation                   & 29 & 26 &   53 &  3831.5k &  1906.4k & 20 & 1700 & 1425.8k &  880.8k \\
\hline
\end{tabular}
\vspace*{-2em}
\end{table}

\section{Related Work}
A straightforward way of computing answer sets of logic programs is a reduction to CNF-SAT.
This may require the introduction of additional atoms. As shown by Lifschitz and Razborov~\cite{lifraz04a}, it is unlikely that, in general, a polynomial-size translation from ASP to CNF-SAT would not require additional atoms.
Evidence is provided by the encoding of Lin and Zhao~\cite{linzha02a} that has exponential space complexity.
Another result, shown by Niemel{\"a}~\cite{niemela99a}, is that ASP cannot be translated into CNF-SAT in a \emph{faithful} and \emph{modular} way.
Reductions based on level-mappings devised in~\cite{jani11a} are non-modular but can be computed systematically, using only sub-quadratic space.
%
%
%
%
An advantage of native ASP solvers like \emph{clasp}~\cite{gekanesc07a}, \emph{dlv}~\cite{lepffaeigopesc06a}, and \emph{smodels}~\cite{siniso02a} over SAT-based systems~\cite{gilima06a,jani11a,linzha02a}, however, is that they can potentially propagate more consequences, e.g., using our techniques, faster.

Formal means for analysing ASP computations in terms of inference were introduced by~Gebser and Schaub~\cite{gebsch06c}.
According which, \emph{smodels' atmost} and \emph{dlv's greatest unfounded set detection}, both compute WFN and FL, respectively.
Similarly, \emph{clasp's unfounded set check} computes~FL~\cite{angesc06a}.
Gebser and Schaub also identified the backward propagation operations for unfounded sets, i.e., WFJ and BL.
A method that can be used to propagate BL has been proposed by Chen, Ji and Lin~\cite{chjili13a}, but it is inefficient due to high computational costs.
We have devised a linear-time approximation of WFJ and shown under which conditions our method simulates WFJ and BL, respectively.
Moreover, we have put forward WFD and LD as new forms of inference that can draw additional consequences from unfounded sets.
Our approach uses a reduction to the task of finding all dominators in the support flowgraph of a logic program, for which efficient algorithms exist.
For instance, Tarjan's algorithm~\cite{geta04a} runs in linear time, and computing all dominators can be made incremental~\cite{srgale97a}.

\section{Conclusions}
Our work is motivated by the desire to understand the effect of propagation in ASP and the diverse modelling choices that arise from logic programming on the process of solving a combinatorial problem.
In this paper, we have analysed the ASP inference on reachability conditions. In contrast to the intuition that the inference in extisting ASP systems naturally handles reachability, our results have shown that even some restricted variants of reachability cannot be efficiently propagated by a combination of UP and WFN. This gap can be closed with WFJ, but existing implementations are inefficient. Our main contribution was a linear-time approximation of WFJ based on a reduction to finding all dominators in a flowgraph representation of the logic program. This gave rise to novel forms of inference, WFD and DL, which can be approximated using the same techniques. We have outlined classes of logic programs for which our approximations simulate WFJ and BL, and WFD and LD, respectively. This includes reachability.
Our experimental data encourages the integration of an incremental linear-time algorithm for finding all dominators into an ASP system.
Despite our best efforts, efficient algorithms for fully propagating WFJ and WFD 
remain an open problem.

\paragraph{Acknowledgements}
NICTA is funded by the Australian Government as represented by the Department of Broadband, Communications and the Digital Economy and the Australian Research Council through the ICT Centre of Excellence program.
The authors would like to thank Valentin Mayer-Eichberger for useful discussions, and for pointing out the error in Theorem~10 of an earlier version of this paper.

\end{document}